\pgfplotsset{compat=1.7}
\pgfplotsset{compat=1.15}
\definecolor{Bluedudu}{RGB}{51,153,255}
\def\split{true}
\newcommand{\ignore}[1]{}
\renewenvironment{proof}[1][Proof.]{\textbf{#1} }{\ \rule{0.5em}{0.5em}}
\newtheorem{theorem}{Theorem}
\newtheorem*{theorem*}{Theorem}
\newtheorem{lemma}{Lemma}
\newtheorem{definition}{Definition}
\newtheorem{proposition}{Proposition}
\newtheorem{remark}{Remark}
\def\1{{1\hskip-2.5pt{\rm l}}}
\def\eps{\varepsilon}
\def\Pr{\textbf{{\rm Pr}}}
\def\R{\mathbb{R}}
\def\N{\mathbb{N}}
\def\E{\mathbb{E}}
\def\p1{0.05}
\begin{document}

\title{Dynamic screening\thanks{For their valuable comments, the authors wish to thank the participants of the One World Mathematical Game Theory Seminar, the Aix-Marseille School of Economics Interaction seminar,  the Adam Smith Business School Micro-theory Seminar, the Manchester Economic Theory Seminar, ORSIS 2019, the Tel Aviv University Game Theory and Mathematical Economics Research Seminar, the Technion Game Theory Seminar, and the Bar Ilan University Game Theory  Seminar. \ignore{two anonymous referees, and the associate editor}
Lagziel acknowledges the support of the Israel Science Foundation, Grant \#513/19.}}
\author{David Lagziel\thanks{Ben-Gurion University of the Negev, Beer-Sheva 8410501, Israel.  E-mail: \textsf{Davidlag@bgu.ac.il}.} \
and  Ehud Lehrer\thanks{Tel Aviv University, Tel Aviv 6997801, Israel.  E-mail: \textsf{Lehrer@post.tau.ac.il}.}} 
\maketitle

\thispagestyle{empty}

\lineskip=2pt\baselineskip=5pt\lineskiplimit=0pt

\noindent{\textsc{Abstract}}:
\begin{quote}
We study dynamic screening problems in which elements are subjected to noisy evaluations and, at every stage,  some of the elements are rejected, whereas those remaining  are independently re-evaluated in subsequent stages.
We prove that, ceteris paribus, the quality of a screening process may not improve when the number of stages increases.
Specifically,  we examine the resulting elements' values and show that adding a single stage to a screening process may produce inferior results in terms of stochastic dominance, whereas increasing the number of stages substantially leads to a first-best outcome.
\end{quote}

\bigskip
\noindent {\emph{Journal of Economic Literature} classification numbers: C70, D49, D81.

\bigskip

\noindent Keywords: dynamic screening; prefect screening; threshold strategies.

\newpage

\lineskip=1.8pt\baselineskip=18pt\lineskiplimit=0pt \count0=1

\section{Introduction} \label{Section - Intro}

Imagine a bacterial infection that spreads through a population. 
Preventive treatment (for example, pre-exposure prophylaxis) is available,  but supply is limited.  
Therefore,  the government seeks to identify and select high-risk individuals for this treatment. 
Fortunately,  a simple test, which is easy to administer across the population, provides a reasonable (though noisy) identification of individual risk. 
Accordingly, the existing policy grants treatment to individuals who score high on this test. 
Recently,  a more accurate---yet costly---test became available and was added as a second step to improve the screening. 
Specifically,  the first test is used to identify a subset of individuals as high-risk,  and they are then re-examined using the second test,  so that only the ones with the highest risk-score receive the treatment.
The final number of people receiving treatment remains as before.
After some experience with the new screening procedure,  it appears that the two-stage process did not improve the identification,  in fact,  it produced worse results altogether!
Namely, the combined test was more likely to misidentify low-risk individuals as high-risk ones compared to the one-stage procedure previously utilized.
How is this possible?

In the current paper we answer this question through the analysis of dynamic screening processes, such as the one described above. 
We focus on a decision maker who screens elements from a general set, based on noisy valuations.
The screening process could vary from a single to multiple stages,  while keeping an overall capacity constraint on accepted elements.
In other words,  once a dynamic screening process is in place,  a subset of the elements are rejected in every stage,  whereas the remaining elements are independently re-evaluated in subsequent stages.

Within this framework, we strive to identify the key advantages and disadvantages of dynamic screening.
We pursue this goal through two complementing paths.
The first is a comparison of a one-stage and a two-stage screening. 
The second path follows an asymptotic approach,  dealing with multi-stage screening in which the number of stages increases.
The joint results of these two approaches shed light on why do people use dynamic screening,  and under what conditions additional screening stages lead to suboptimal outcomes.

Specifically,  the first research path introduces a basic concept that we refer to as \emph{the hidden cost of dynamic screening}.
At face value,  additional screening stages impose a direct cost in time and effort.
Intuitively, one would expect that the added costs would be fully justified and compensated for by the anticipated superior nature of the obtained outcomes. 
We show, however, that even when the accuracy of the additional stages is superior, they may in fact generate \emph{inferior} results, both in terms of the expected value and in terms of stochastic dominance.
The basic idea behind this result is that given a capacity constraint on accepted elements,  the introduction of additional stages must be accompanied by some \emph{self-induced slack}.
One cannot simply add screening stages without either lowering the bar in preliminary stages and introducing suboptimal elements, or otherwise violating the capacity constraint.
Once these suboptimal elements are introduced, the noise in subsequent stages becomes more effective/destructive, although it is strictly more informative (in the sense of \cite{Lehmann1988}, for example) compared to previous stages.
Let us explain this insight through the following (stylized) example, which will also serve to present the key findings of our second research method.

\subsection{The hidden cost of dynamic screening: an example of self-induced slack} \label{Section - example in introduction}

Let us return to our preventive-treatment example.
To simplify the exposition, assume that the individual-risk parameter is uniformly distributed across the population, according to a random variable $V\sim U[0,1]$.
Because the preliminary test is somewhat noisy,  it cannot identify the realized value $v$ of every individual,  but only $v+N_1$, where $N_1\sim U[-1/4,1/4]$ is an additive unbiased noise variable (i.e., independent and symmetric about zero).
Assume further that the limited supply allows treatment for only $5\%$ of the population.
Following \cite{Lagziel2020a}, it is optimal for the government to use a threshold strategy.
Thus,  to meet the capacity constraint,  a threshold is fixed so that a realized noisy valuation grants treatment if and only if it is above that threshold.
This one-stage screening process yields a conditional distribution over the individual-risk parameter (across the population),  denoted $V_1$, whose CDF $F_{V_1}(v)$ is given by the solid (blue) graph in Figure \ref{Figure - example for one-stage vs two-stage screening}(a).

Now consider the updated two-stage policy which involves the  application of a superior second test.
Let us assume the second test is $20\%$ more accurate,  so it is characterized by an additive noise variable $N_2 \sim U[-1/5,1/5]$.
Again, due to the limited supply and high costs,  the government decides the administer the first test across the population,  and only the top $10\%$ of high-risk individuals are subjected to the second test.
Eventually, the $5\%$ treatment constraint is met.
This two-stage screening yields a conditional distribution of individual risk, denoted $V_2$, whose CDF $F_{V_2}(v)$ is given by the dashed (red) graph in Figure  \ref{Figure - example for one-stage vs two-stage screening}(a).

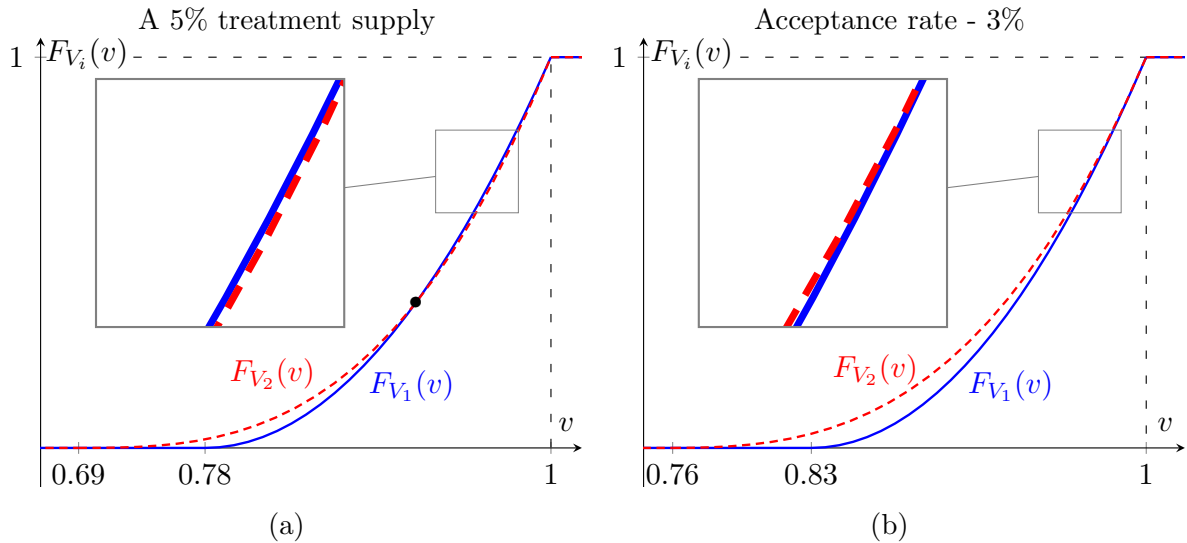
\begin{figure}[ht]
\centering
{\large One-stage versus two-stage screening}
\bigskip

\centering
\begin{minipage}{0.45\textwidth}

\centering 
A $5\%$ treatment supply

\begin{tikzpicture}[scale=1.05,spy using outlines={rectangle,gray,magnification=3,size=3.3cm, connect spies}]

\def\AAA{0.05}
\def\BBB{1}
\def\CC{0.3}

\begin{axis}[axis x line=middle, axis y line=middle, xtick={0,0.694595,1-(\AAA)^(1/2),1},ytick={0,1}, xmin=0.67, xmax=1.02, ymin=-0.1, ymax=1.05]

	\addplot[loosely dashed] coordinates{(0.73,1)(1,1)(1,0)};
  	\node at (axis cs: 1.01,0.06) {$v$};
   	\node at (axis cs: 0.7,1.01) {$F_{V_i}(v)$};

	\addplot[line width=\CC mm, solid, blue, domain=0:1-(\AAA)^(1/2)] {0};
	\addplot[line width=\CC mm,solid, blue, domain=1-(\AAA)^(1/2):1] {((1/\AAA)*(\x-(1-(\AAA)^(1/2)))^2)^(\BBB)};
	\addplot[line width=\CC mm, solid, blue, domain=1:1.05] {1};
	\node[blue] at (axis cs: 0.91,0.16) {$F_{V_1}(v)$};
	
	\addplot[line width=\CC mm,densely dashed, red, domain=0:0.694595] {0};
	\addplot[line width=\CC mm,densely dashed, red, domain=0.694595:1] {(50*(\x-0.694595)^2*(\x-(1-(0.1)^(1/2))-(1/3)*(\x-0.694595)))^(\BBB)};
	\addplot[line width=\CC mm,densely dashed, red, domain=1:1.05] {1};
	\node[red] at (axis cs: 0.82,0.19) {$F_{V_2}(v)$};

	\node [black,radius=1pt] at (0.912681,0.371487) {\small \textbullet};

	\spy on (5.8,4.2) in node [right] at (0.7,3.6);
		
\end{axis}
\end{tikzpicture}
\centering (a)

\end{minipage}
\hspace{1mm}
\begin{minipage}{0.45\textwidth}

\centering 
Acceptance rate - $3\%$

\begin{tikzpicture}[scale=1.05,spy using outlines={rectangle,gray,magnification=3,size=3.3cm, connect spies}]

\def\AAA{0.03}
\def\BBB{1}
\def\CC{0.3}

\begin{axis}[axis x line=middle, axis y line=middle, xtick={0,0.7550510,1-(\AAA)^(1/2),1},ytick={0,1}, xmin=0.74, xmax=1.02, ymin=-0.1, ymax=1.05]

 	\addplot[loosely dashed] coordinates{(0.79,1)(1,1)(1,0)};
  	\node at (axis cs: 1.01,0.06) {$v$};
   	\node at (axis cs: 0.765,1.01) {$F_{V_i}(v)$};

	\addplot[line width=\CC mm, solid, blue, domain=0:1-(\AAA)^(1/2)] {0};
	\addplot[line width=\CC mm,solid, blue, domain=1-(\AAA)^(1/2):1] {((1/\AAA)*(\x-(1-(\AAA)^(1/2)))^2)^(\BBB)};
	\addplot[line width=\CC mm, solid, blue, domain=1:1.05] {1};
	\node[blue] at (axis cs: 0.93,0.16) {$F_{V_1}(v)$};

	\addplot[line width=\CC mm,densely dashed, red, domain=0:0.7550510] {0};	
	\addplot[line width=\CC mm,densely dashed, red, domain=0.7550510:1] {((250/3)*(\x-0.7550510)^2*(\x-0.71835-(1/3)*(\x-0.7550510)))^(\BBB)};
	\addplot[line width=\CC mm,densely dashed, red, domain=1:1.05] {1};
	\node[red] at (axis cs: 0.86,0.20) {$F_{V_2}(v)$};

	\spy on (5.8,4.2) in node [right] at (0.7,3.6);

\end{axis}
\end{tikzpicture}
\centering (b)

\end{minipage}
\caption{\footnotesize  Both graphs relate to the example of one-stage and two-stage screening with $V\sim U[0,1]$, $N_1\sim U[-1/4,1/4]$, and $N_2 \sim U[-1/5,1/5]$.  The solid (blue) lines are the CDFs,  denoted $F_{V_1}(v)$,  among values under a one-stage screening,  while the dashed (red) lines are the CDFs, denoted $F_{V_2}(v)$, among  values under a two-stage process.  Though first-order stochastic dominance exists under a $3\%$ treatment constraint (or below), but not under a $5\%$ rate (see single crossing at solid black dot), the expected value under one-stage screening is strictly higher in both cases.}\label{Figure - example for one-stage vs two-stage screening}
\end{figure}

Computing the conditional expected value for both processes,  we get that $\E[V_1] > \E[V_2]$.
So not only the additional screening stage is costly, it actually generates an inferior outcome.
Moreover,  in a slightly worse situation, where the overall supply of treatment is just $3\%$ and the second test is only available for  $6\%$ of high-risk individuals, then the one-stage screening \emph{first-order stochastically dominates} the two-stage process [see the respective graphs in Figure  \ref{Figure - example for one-stage vs two-stage screening}(b)].

The cause for this phenomenon is the \emph{self-induced slack} of dynamic screening.
The introduction of additional screening stages must be accompanied by a reduced acceptance criterion (i.e., lower threshold) in preliminary stages. 
This change in criterion transfers additional burden into advanced screening stages in which both the noise and the sample space reduce \emph{simultaneously}, thus creating a considerable consequent effect on the outcomes.
In other words, there is a built-in trade-off in which the sample size and noise decrease at the same time,  so that the expected outcome could go either way.

Note that the concept of dynamic screening need not be dynamic in time.  
Specifically, one can get similar results when using several tests concurrently,  while following a unanimity rule.
Although technically not dynamic in time,  the unanimity  rule assures that the dual screening is essentially a dynamic process.
In recruitment processes, for example,  or even in  peer-review academic publishing, the use of two judges under a unanimity rule (for acceptance) is, de facto, a dynamic screening process.

What are the conditions that lead to sub-optimal  outcomes?
First and foremost, it is important to note that such results are not limited to uniform distributions (we provide similar results for general non-atomic distributions).
Second, as we already know,\footnote{See the notion of a \emph{contraction mapping} in \cite{Lagziel2020a}.} a one-stage screening process based solely on a more-accurate second test yields better results than a one-stage screening given the first, basic test.
Thus,  the fact that most of the screening is performed in less accurate stages plays an important role generating this phenomenon.
In fact, the effect is completely reversed when shifting from ``elite" screening (i.e., screening at the top of the distribution) to ``low-level" screening.
In other words, the implicit cost becomes evident when the overall capacity (i.e., acceptance) constraint is low, and subsequent stages are more vulnerable  to relatively high distributive noises.
If one should examine the previous examples given a high supply of treatment (for example,  more than $70\%$),  then the two-stage screening stochastically dominates the one-stage  process.

This distinction brings us to the second part of our analysis, i.e.,  the asymptotic approach.
Let us now consider a slightly different example: screening job applicants.
Because the stakes are high, many cutting-hedge institutions conduct prolonged applicants screening processes with multiple stages.
It seems quite implausible that all these institutions are under-performing.
To remove doubt, this is not a claim that we make.
Once multiple stages are introduced, then the overall constraint could be maintained by consistently screening in small portions.
That is,  every stage can support a high acceptance rate, effectively making it a low-level screening, whereas the final outcome matches the overall constraint.

In light of this insight, we reach our second main result which establishes a \emph{convergence to perfect screening}.
We prove that, even if all stages are subject to the same noise,  a multi-stage asymptotic screening process yields a first-best posterior distribution, as if the screening was preformed with no noise whatsoever.
We refer to such an outcome as \emph{perfect screening}, and show that standard stationary strategies,  such as a  fixed-threshold strategy or a fixed-capacity strategy,  lead to a perfect screening outcome.
Note that we obtain this result even when the same noise is used repeatedly,  that is, even if the accuracy along the stages does not improve.

A combination of the results that originate from these two research paths leads to a (somewhat striking) conclusion: the quality of a screening process, as a function of the number of stages, is not necessarily monotone.
To put it differently, introducing one additional referee or one additional stage may be detrimental,  whereas adding multiple ones would probably improve the screening.
In what follows,  we provide the specific details and conditions to support this statement.

\subsection{Relation to relevant literature}

The accurate position of this work in the literature is anything but trivial.
On the one hand,  this work joins the extensive theory of statistical decision making,  concerning both dynamic decision problems [going back to the works of \cite{Wald1939,Wald1947} and \cite{Arrow1949}] and the comparison of information structure.\footnote{See,  among others, \cite{Milgrom1981}, \cite{Quah_Strulovici2009}, \cite{Ganuza2010},  \cite{Chambers2011},  and more recently \cite{Athey2018}  and  \cite{Lagziel2020a}.}
On the other hand, this study also concerns the theory of information aggregation, which varies from social learning and information cascade to group decisions and committees of experts.\footnote{See, e.g.,  \cite{Banerjee1992}, \cite{Bikhchandani1992}, \cite{Glazer1998}, \cite{Dekel2000}, \cite{Ottaviani2001}, \cite{Levy2004},
\cite{Levy2007}, \cite{Visser2007}, \cite{Gershkov2009}, among many others. For a recent extensive survey, see \cite{Bikhchandani2021}.}

Though these two branches of the literature are extensive and thorough, our work precisely fits neither.
There are three basic elements that distinguish this research from previous ones.
First,  our formulation of screening problems, in the context of statistical decision theory,  naturally combines a capacity constraint that is typically missing from the aforementioned studies [with the exception of \cite{Lagziel2020a}].
This constraint is necessary for our analysis and outcomes.
Second,  many of the above-mentioned studies extended the seminal work of \cite{Condorcet1785} by introducing costly observations and strategic accumulation of information.
Our work is more basic in this sense, because it raises the question of whether another signal is beneficial altogether, irrespective of its price, and independently of the evaluators' preferences (thus information cascades and herding are less relevant in our framework).
Third, our framework builds on general, non-atomic distributions where the classification of a signal as either ``true"  or ``false" is irrelevant.
In our model, every signal provides more information about the actual value, yet as was exemplified  in Section \ref{Section - example in introduction}, the signal might still  be detrimental.
These stark differences are best exemplified by \cite{Bikhchandani2021}, who state that ``In purely individual decision making, an extra signal always makes an agent weakly better off" (see Section $2.6$ therein).
Though this statement is completely true for the relevant models of information aggregation and social learning,  one of our main results proves the opposite.

Nevertheless, there are some basic similarities between our study and previous ones.
The first, rather basic similarity,  is the fact that our asymptotic analysis yields the first-best screening outcome.
This is also the key insight of \cite{Condorcet1785}, with the obvious distinctions from our model:  \cite{Condorcet1785} and subsequent works build on a binary state of the world, a majority rule,  and a necessary condition concerning the informativeness of signals,  which are irrelevant in our framework.
Another similarity, mainly related to the research aspect,  arises from the studies of \cite{BenYashar2000} and \cite{Berend2005}, who prove that random committees of at least three experts outperform a single expert.
Yet, the key differences between the models and assumptions completely change the outcome under our analysis.

A different research field where our results may be of some interest is data analysis and data cleansing.
In the realm of statistical power analysis,  researchers need to maintain a sufficiently large sample size to detect an effect of a given size.
In this respect,  our capacity constraint fits rather naturally.  
Our results indicate that reducing the sample size does not necessarily improve the inspection of a given hypothesis.
Though a direct application goes well beyond the scope of this work (and quite possibility requires a different, extensive study altogether), we do believe that the given results may be of some importance to this field.

\subsection{Structure of the paper}
The paper proceeds as follows. 
In Section \ref{Section - Model} we describe the basic model and key definitions.
In Section \ref{Section - main results} we present the main results, divided into two subsections: 
in Subsection \ref{Section - one vs two stage screening} we compare one-stage and two-stage screening processes, and in Subsection \ref{Subsection - asymptotic screening problems} we carry out an asymptotic analysis of screening problems.
Concluding remarks, including a discussion about cost functions, are given in Section \ref{Section - extensions}.

\section{Preliminaries} \label{Section - Model}



Consider a set of elements whose intrinsic values are distributed according to a non-constant random variable $V$, referred to as an  \emph{impact} variable.
For every $i \in \N$, let $N_i$ be a random variable which defines the additive evaluation errors in stage $i$, and referred to as the \emph{stage-i noise}.
We generally assume that all noise variables are unbiased --- symmetric about zero and jointly independent of $V$ and of each other.
A \emph{capacity} $p\in (0,1)$ dictates the proportion of accepted elements.
That is, the screening is constrained by the requirement to accept a fraction $p$ of the proposed elements.

A $k$-stage screening problem ${\rm SP} = \left(V, \{N_i\}_{i=1}^k, p\right)$ consists of an impact variable $V$, noise variables $N_1,\dots, N_k$, and a capacity $p$.
The screening problem evolves as follows.
Denote $V_1 = V$.
In each stage $i\geq 1$, the DM observes $V_i+N_i$ and fixes a threshold $t_i \in \R$, so that 
\begin{equation} \label{Eq. - Def. of cond. V}
V_{i+1} \sim V_i |\{V_i + N_i \geq t_i \},
\end{equation}
where
\begin{equation} \label{Eq. - Def. of capacity constraint}
\Pr(V_i + N_i \geq t_i) = p_i \ \ {\rm and } \ \ \prod_{i=1}^{k}p_i = p.
\end{equation}
In words, in every stage $i$, the DM observes the noisy valuation $V_i+N_i$ and fixes a screening threshold $t_i$, so that only the elements whose noisy valuations are at least $t_i$ proceed to the subsequent stage.
In every stage $i$, the DM maintains a capacity of  $p_i \in [0,1]$ to support an overall capacity of $p = \prod_{i=1}^{k}p_i$.
Note that we also allow for \emph{infinite} screening problems, denoted ${\rm SP} = \left(V, \{N_i\}_{i=1}^{\infty}, p\right)$, by taking an unbounded number of stages.

Given ${\rm SP} = \left(V, \{N_i\}_{i=1}^k, p\right)$, a \emph{strategy} $\tau$ is a sequence of threshold values $\tau = (t_1,\dots,t_k) \in \R^k$.
Let $V_{{\rm SP}}(\tau)$ denote the post-screening conditional distribution of the accepted elements' values.
That is, $V_{{\rm SP}}(\tau)= V_{k+1}$ where $V_{k+1}$ is defined according to Eq.\ \eqref{Eq. - Def. of cond. V}, Eq.\ \eqref{Eq. - Def. of capacity constraint},  and $\tau$.
The main goal of the DM is to maximize the expected value $ \E[V_{{\rm SP}}(\cdot)]$ of  accepted elements.

For tractability, we make the following assumptions: (i) every variable $X$ (of the above) has a density function $f_X$ and a CDF $F_X$ that are fully supported (namely,  $f_X$ is strictly positive) on some bounded interval, denoted $[\underline{X},\overline{X}]$; and (ii) unless stated otherwise, all noises are i.i.d.\ random variables.
As will later become evident, one can relax these assumptions (e.g., by varying the noises along the stages to be more informative in the sense of \cite{Lehmann1988}, or by using a contracting mapping as in \cite{Lagziel2020a}; see Subsection $3.2$ therein)  and still maintain our key insights.
To be clear,  we do divert from the second assumption,  regarding i.i.d.\ noises,  in the first part of our analysis when we examine a two-stage screening with a more informative second stage,  relative to the first stage.

\subsection{Stationary strategies} \label{Section - stationary strategies}

Our analysis is based on threshold strategies, and in the context of dynamic screening, one can also consider two types of stationary strategies: fixed-threshold strategies and fixed-capacity strategies.
Formally, fix a $k$-stage screening problem ${\rm SP}$, and consider the following two  stationary strategies.
The first strategy, referred to as the \emph{fixed-threshold strategy}, dictates that $\tau = (t,\dots,t)$, which means that all threshold values, throughout the $k$ stages, are identical.
By continuity, one can fix the (unique) threshold value $t$ to maintain the capacity constraint $p$, thus the strategy is well defined.
The second strategy, referred to as the \emph{fixed-capacity strategy}, dictates that $p_i = p^{1/k}$ for every stage $1\leq i \leq k$. 
That is,  all threshold values are fixed so that, in every stage, a fraction $p^{1/k}$ of the elements being evaluated proceed to the subsequent stage.
We shall use these strategies in both parts of our analysis.

The first part of our analysis also requires a precise definition concerning stochastic dominance of one screening process over another.
The following definition captures this notion.

\begin{definition}
For every $i=1,2$, consider an $i$-stage screening problem ${\rm SP}_i$ with a specific strategy $\tau_i$.
We say that \emph{one-stage screening under ${\rm SP}_1$ given $\tau_1$ stochastically dominates two-stage screening under ${\rm SP}_2$ given $\tau_2$}  if $V_{{\rm SP}_1}(\tau_1)$ first-order stochastically dominates $V_{{\rm SP}_2}(\tau_2)$.
\end{definition}

We should clarify that a one-stage screening problem requires no specification regarding the strategy, since the relevant threshold is well defined and unique.
Therefore,  henceforth we will not specify the type of the one-stage strategies.

\section{Main results} \label{Section - main results}

Our analysis consists of two parts:
in Subsection \ref{Section - one vs two stage screening} we compare one-stage screening with a two-stage process, and in Subsection \ref{Subsection - asymptotic screening problems} we adopt an asymptotic approach by focusing on a dynamic screening while substantially increasing the number of stages.

\subsection{A comparison of one-stage and  two-stage screening} \label{Section - one vs two stage screening}

Our comparison of a one- and a two-stage screening processes consists of four results.
First, in Theorem \ref{Theorem - FOSD with fixed thresholds}, we consider a two-stage screening performed under a stationary, fixed-threshold strategy where all noises are identically distributed.
Under these conditions we prove that for a sufficiently low acceptance rate (i.e., elite screening) one-stage screening stochastically dominates the two-stage process.
Next, we extend this result in Proposition \ref{Proposition - FOSD distinct screening} and in Proposition \ref{Proposition - FOSD full distinct screening} by omitting the stationary fixed-threshold condition and by allowing the noises' distributions to vary between the two stages.
Lastly,  in Theorem \ref{Theorem - two-stage dominance with fixed thresholds}, we show how one can revert these results by adopting a sufficiently high capacity constraint, i.e.,  through low-level screening.

We start with Theorem \ref{Theorem - FOSD with fixed thresholds},  which relates to i.i.d.\ noises along with an implementation of a stationary, fixed-threshold strategy in the two-stage process.
In other words,  Theorem \ref{Theorem - FOSD with fixed thresholds} captures the alternative interpretation (for dynamic screening) of using two evaluators instead of one, when both are subjected to the same noise and use the same screening strategy.
The theorem states that, for a sufficiently low capacity, namely a very restrictive screening,  it is better to use one stage/evaluator instead of two.  
(All proofs are deferred to the Appendix.)

\begin{theorem} \label{Theorem - FOSD with fixed thresholds}
For every impact and noise variables $V$ and $N$,   there exists $p_0>0$ such that for every capacity $p < p_0$,  one-stage screening under $(V,N,p)$ stochastically dominates two-stage screening under $(V,\{ N,N \},p)$ given a stationary fixed-threshold strategy.
\end{theorem}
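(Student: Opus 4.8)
The plan is to compare the two post-screening densities directly and to reduce the desired first-order stochastic dominance to a single-crossing property, which I then check by an asymptotic analysis of the thresholds as $p\to 0$. Write $V_1$ for the one-stage outcome under $(V,N,p)$, with unique threshold $t^{\ast}$, and $V_2$ for the two-stage outcome under $(V,\{N,N\},p)$ given the fixed-threshold strategy with common threshold $t$. Because the two stage-noises are independent copies of $N$ and independent of $V$, the two-stage process is simply conditioning on $\{V+N_1\ge t\}\cap\{V+N_2\ge t\}$ with $N_1,N_2$ independent copies of $N$; hence $p=\Pr(V+N_1\ge t,\,V+N_2\ge t)=\E\!\big[\bar F_N(t-V)^2\big]$, where $\bar F_N:=1-F_N$, and $V_2$ has density $f_{V_2}(v)=f_V(v)\,\bar F_N(t-v)^2/p$, while $p=\E[\bar F_N(t^{\ast}-V)]$ and $V_1$ has density $f_{V_1}(v)=f_V(v)\,\bar F_N(t^{\ast}-v)/p$.

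Next, put $g(v):=F_{V_2}(v)-F_{V_1}(v)=\int_v^{\overline V}\big(f_{V_1}-f_{V_2}\big)$, so that $g(\underline V)=g(\overline V)=0$ and the claim is exactly $g\ge 0$ on $[\underline V,\overline V]$. Since $g'(v)=-\frac{f_V(v)}{p}\,h(v)$ with $h(v):=\bar F_N(t^{\ast}-v)-\bar F_N(t-v)^2$ and $f_V>0$, it suffices to find a single $v_0$ with $h\le 0$ on $[\underline V,v_0]$ and $h\ge 0$ on $[v_0,\overline V]$: then $g$ rises then falls, so $g\ge\min\{g(\underline V),g(\overline V)\}=0$. (This actually gives monotone-likelihood-ratio dominance, but first-order dominance is all that is claimed.)

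The heart of the matter is to locate the thresholds. As $p\to 0$ both $t^{\ast}$ and $t$ climb to the top $\overline V+\overline N$ of the support of $V+N$; set $\eta^{\ast}:=\overline V+\overline N-t^{\ast}$ and $\eta:=\overline V+\overline N-t$. Using that the densities are bounded and bounded away from $0$ near the endpoints of their supports, a short convolution estimate gives $\Pr(V+N\ge t^{\ast})\asymp(\eta^{\ast})^2$, hence $\eta^{\ast}\asymp p^{1/2}$, whereas $\E[\bar F_N(t-V)^2]\asymp\eta^{3}$, hence $\eta\asymp p^{1/3}$. Thus, for $p$ small, $0<\eta^{\ast}<\eta$ (so $\delta:=t^{\ast}-t=\eta-\eta^{\ast}>0$) and, crucially, $\eta^{\ast}/\eta^{2}\to\infty$. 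Now I analyse $h$. Since $\bar F_N$ vanishes on $[\overline N,\infty)$, for every $v\le t^{\ast}-\overline N$ one has $\bar F_N(t^{\ast}-v)=0$, whence $h(v)=-\bar F_N(t-v)^2\le 0$. On the remaining interval substitute $x=t-v$, which runs over $[\overline N-\eta,\overline N-\delta]$, and set $\psi(x):=h(t-x)=\bar F_N(x+\delta)-\bar F_N(x)^2$. Then $\psi(\overline N-\delta)=-\bar F_N(\overline N-\delta)^2<0$, while $\psi(\overline N-\eta)=\bar F_N(\overline N-\eta^{\ast})-\bar F_N(\overline N-\eta)^2>0$, because the first term is of order $\eta^{\ast}$ and the subtracted square is only of order $\eta^{2}=o(\eta^{\ast})$. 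Finally $\psi'(x)=-f_N(x+\delta)+2f_N(x)\bar F_N(x)$; on this $x$-range the point $x+\delta$ lies within $\eta^{\ast}$ of $\overline N$, so $f_N(x+\delta)$ is bounded below, while $2f_N(x)\bar F_N(x)=O(\eta)\to 0$; hence $\psi'<0$. So $\psi$ strictly decreases from a positive to a negative value and crosses $0$ once; translating back, $h$ passes from $\le 0$ to $\ge 0$ at a single $v_0\in(t^{\ast}-\overline N,\overline V)$, exactly as required. Taking $p_0$ small enough for all the above estimates completes the proof.

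I expect the third step --- the threshold comparison --- to be the real obstacle: one has to show that the \emph{self-induced slack} forces the one-stage threshold an entire order of magnitude closer to the top of the support than the two-stage threshold ($\eta^{\ast}\asymp p^{1/2}$ versus $\eta\asymp p^{1/3}$, so $\eta^{\ast}\ll\eta$ and even $\eta^{\ast}\gg\eta^{2}$), and that this separation is precisely what makes $\psi$ monotone on the short interval $[\overline N-\eta,\overline N-\delta]$ where the two supports disagree. Everything else is elementary bookkeeping with the behaviour of $f_V$ and $f_N$ near their upper endpoints; the symmetry of $N$ and the standing regularity hypotheses are used only to guarantee that the relevant tails are comparable to linear functions.
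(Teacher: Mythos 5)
Your proposal is correct, and it shares the paper's overall skeleton: both write $f_{V_1}\propto f_V(\cdot)\,G(t^{*}-\cdot)$ and $f_{V_2}\propto f_V(\cdot)\,G(t-\cdot)^{2}$ with $G=\bar F_N$, and both reduce first-order dominance to a single sign change of $f_{V_1}-f_{V_2}$ (your $h$). Where you genuinely differ is in how the single crossing is established for small $p$. The paper never computes rates for the thresholds: it extends $G$ linearly outside $[\underline{N},\overline{N}]$, applies the implicit function theorem to $H(t,t_2)=J(t_1-t)-J^{2}(t_2-t)$ at the degenerate point $(t_1-\overline{N},t_1)$, where $H'_t=-J'(\overline{N})>0$, and then excludes a second crossing by a mean-value-theorem/limit contradiction. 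You instead quantify the self-induced slack explicitly, $\overline{V}+\overline{N}-t^{*}\asymp p^{1/2}$ versus $\overline{V}+\overline{N}-t\asymp p^{1/3}$, and use these orders both to sign $\psi$ at the endpoints of the overlap interval and to show $\psi'<0$ there; this derivative estimate is exactly the positivity of $h'$ that the paper extracts only in the limit via the IFT, so your argument is a quantitative, self-contained substitute for the paper's qualitative one, and it additionally makes the location of the crossing and the magnitude of the threshold gap transparent. The price is that you invoke densities bounded above and away from zero near the upper endpoints of the supports, but the paper's own proof needs essentially the same regularity (continuous differentiability of $G$ and $J'(\overline{N})<0$), so your hypotheses are no stronger in substance. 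Two minor points: the positivity of $\psi(\overline{N}-\eta)$ could also be deduced for free from the common capacity constraint (if $h<0$ throughout, $V_2$ would carry strictly more mass than $V_1$), and your parenthetical claim that the single crossing of the densities already ``gives'' monotone-likelihood-ratio dominance is not a valid implication in general; MLR does happen to hold here for small $p$ (compare hazard rates of $N$ at $t^{*}-v$ and $t-v$), but it is not needed and the remark should either be justified or dropped.
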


Though the proof is given in the Appendix, we wish to provide here some intuition on its structure and technique to better understand the result.
In the proof we consider the probability density functions $(f_{V_1},f_{V_2})$ of the one- and the two-stage screening processes, respectively.
In order to preserve the capacity constraint,  the two-stage process must follow a lower threshold value,  so that $f_{V_2}$ is supported on a larger interval and some lower values are generated with positive probabilities, whereas similar values are eliminated under the one-stage screening.
So, to establish (first-order) stochastic dominance,  we prove that the graphs of the two densities intersect only once.
Graph $(a)$ in Figure \ref{Figure - one vs two stage screening} provides some intuition for this. 

Note that that two-stage screening in Figure \ref{Figure - one vs two stage screening} translates to a parabolic graph,  rather than a straight line, by the fact that the probability of passing two independents tests is the product of two probabilities, one for each stage.
Moreover, notice how the capacity plays a key role in this analysis, as stricter screening (namely, a smaller capacity $p$) increases the threshold values and shifts the two curves to the right, thus maintaining the single crossing property and the dominance of one-stage screening over the two-stage process.
On the other hand,  a more lenient screening (i.e., a higher capacity $p$) reduces the threshold criteria and shifts both curves to the left. 
This allows the RHS crossing (of the two curves) to emerge, while eliminating the LHS crossing from the support of $V$, so that two-stage screening dominates the one-stage process.
(This is the main result of Theorem \ref{Theorem - two-stage dominance with fixed thresholds} below.)

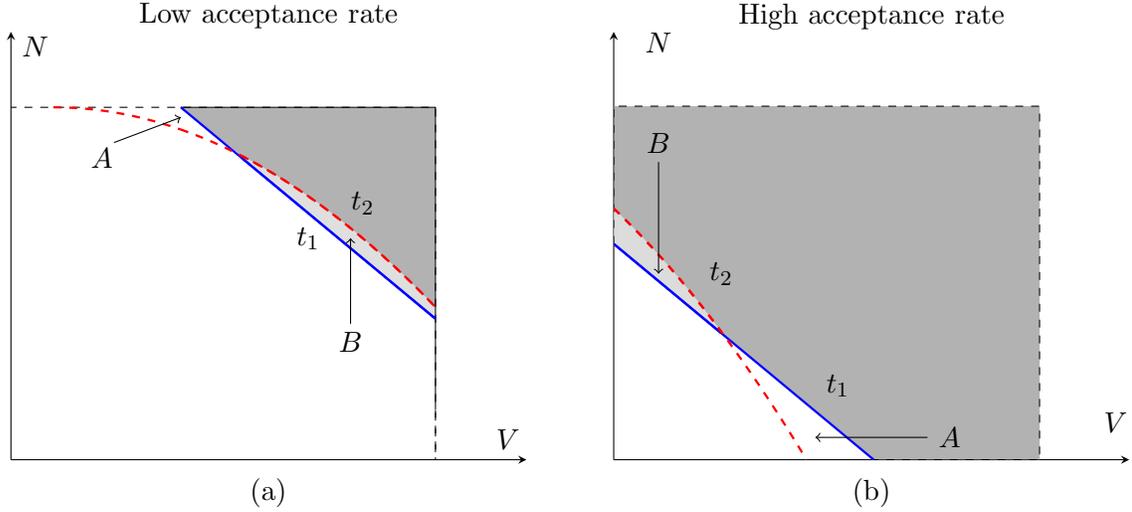
\begin{figure}[ht]
\centering
{\large One-stage versus two-stage screening on the $(V,N)$-plain}
\bigskip

\def\CC{0.3}

\centering
\begin{minipage}{0.45\textwidth}

\centering 
Low acceptance rate

\begin{tikzpicture}
\begin{axis}[axis x line=middle, axis y line=middle, xtick={0},ytick={0}, xmin=0.3, xmax=1.15, ymin=0.3, ymax=1.15]

	\addplot[name path=G, line width=0pt] coordinates{(0.58,1)(1,1)(1,0.4)};	
	\addplot[name path=A, dashed] coordinates{(1,0.1)(1,1)(0.2,1)};	
	\addplot[dashed ] coordinates{(0,1)(0.2,1)};	
	\addplot[dashed] coordinates{(1,0)(1,0.4)};	
	\addplot[name path=B, line width=\CC mm, solid, blue, domain=0.58:1] {1.58-\x};
	\addplot[name path=E, line width=\CC mm, solid, blue, domain=0.68:1] {1.58-\x};
	\addplot[name path=D, line width=\CC mm, dashed, red,domain=0.37:0.58] {1-(\x-0.37)^2};
	\addplot[name path=F, line width=\CC mm, dashed, red, domain=0.68:1] {1-(\x-0.37)^2};
	\addplot[name path=C, line width=\CC mm, dashed, red,domain=0.58:1] {1-(\x-0.37)^2};

	\addplot[fill=gray, opacity=0.6] fill between[of=G and F];
	\addplot[fill=gray, opacity=0.25] fill between[of=F and E];
		
	\node at (axis cs: 1.12,0.34) {$V$};
	\node at (axis cs: 0.34,1.12) {$N$};
	\node at (axis cs: 0.88,0.81) {$t_2$};
	\node at (axis cs: 0.79,0.74) {$t_1$};
	\node at (axis cs: 0.45,0.9) {$A$};
	\node at (axis cs: 0.86,0.535) {$B$};

		\draw [->] (0.86,0.57) -- (0.86,0.74);
		\draw [->] (0.47,0.93) -- (0.58,0.98);

\end{axis}

\end{tikzpicture}

\centering (a)

\end{minipage}
\hspace{1mm}
\begin{minipage}{0.45\textwidth}
\centering 
High acceptance rate
\begin{tikzpicture}
\begin{axis}[axis x line=middle, axis y line=middle, xtick={0},ytick={0}, xmin=0, xmax=1.15, ymin=0, ymax=1.15]

	\addplot[name path=A, dashed] coordinates{(0,0.58)(0,0.95)(0.95,0.95)(0.95,0)(0.58,0)};	
	\addplot[name path=B, line width=\CC mm, solid, blue, domain=0:0.58] {0.58-\x};
 	\addplot[name path=E, line width=\CC mm, solid, blue, domain=0:0.24] {0.58-\x};
	\addplot[name path=D, line width=\CC mm, dashed, red,domain=0:0.43] {1-(\x+0.57)^2};
	\addplot[name path=F, line width=\CC mm, dashed, red, domain=0:0.24] {1-(\x+0.57)^2};

 	\addplot[fill=gray, opacity=0.6] fill between[of=B and A];
	\addplot[fill=white, opacity=0.55] fill between[of=E and F];
		
	\node at (axis cs: 1.12,0.1) {$V$};
	\node at (axis cs: 0.1,1.12) {$N$};
	\node at (axis cs: 0.24,0.5) {$t_2$};
	\node at (axis cs: 0.5,0.2) {$t_1$};
	\node at (axis cs: 0.1,0.85) {$B$};
	\node at (axis cs: 0.75,0.06) {$A$};

		\draw [->] (0.1,0.8) -- (0.1,0.5);
		\draw [->] (0.7,0.06) -- (0.45,0.06);

\end{axis}

\end{tikzpicture}

\centering (b)

\end{minipage}
\caption{\footnotesize Both graphs describe a comparison between one- and two-stage screening: Graph $(a)$ provides such a comparison under a low acceptance rate (a small $p$), whereas Graph $(b)$ provides a similar comparison under a high acceptance rate (i.e., a high $p$).  Under the one-stage screening, depicted by the straight and solid (blue) lines $t_1$, only grey areas pass the screening.  Under the two-stage screening, depicted by the parabolic and dashed (red) lines $t_2$, the light grey area $B$ is eliminated, while all elements in the white area $A$ pass the screening. } \label{Figure - one vs two stage screening}

\end{figure}

Going beyond the technicalities, Theorem \ref{Theorem - FOSD with fixed thresholds} is based on the idea that an additional stage, under an overall capacity constraint, must be accompanied with lower thresholds, relative to the one-stage screening.
In the second stage this reduction poses a problem since the smaller sample size augments the noise's effectiveness in distorting the underlying distribution of values.
That is, once lower values pass through the first stage, out of the two-stage process,  they have a higher probability to pass through the second stage as well, since the noise is relatively more effective, given the smaller sample size.
This does not happen in the one-stage process, since these lower values are eliminated completely due to the higher threshold.

\begin{remark}
One can extend the result of \emph{Theorem \ref{Theorem - FOSD with fixed thresholds}} beyond the assumptions of bounded supports and positive densities.
For example,  \emph{Theorem \ref{Theorem - FOSD with fixed thresholds}} also holds for noises whose probability density functions monotonically diminish from a certain point, or even for noises with unbounded supports whose probability density functions decrease sufficiently fast.
By its technical nature,  the full characterization is left for future research.
\end{remark}

The next two propositions extend Theorem \ref{Theorem - FOSD with fixed thresholds} in two ways.
Proposition \ref{Proposition - FOSD distinct screening} accounts for non-stationary strategies while maintaining i.i.d.\ noises, and Proposition \ref{Proposition - FOSD full distinct screening} extends Proposition \ref{Proposition - FOSD distinct screening} by also accounting for different noises.
These two extensions require some limitations concerning the two-stage screening strategy.
Specifically,  by allowing \emph{any} two-stage strategy, one can effectively converge to a one-stage process if one threshold is sufficiently low, which would make it redundant.
Moreover, if the noise in the second stage is a contraction of the noise in the first stage, then we know that doing most of the screening in the second stage will guarantee a better expected outcome compared to a one-stage process.
Therefore, we introduce the following definitions of \emph{distinct} two-stage strategies.

\begin{definition} \label{Definition - distinct strategies}
Fix $\eps \in (0,1/2)$ and a two-stage screening problem $\rm{SP}_2$ given a strategy $\tau=(t_1,t_2)$. 
Note that $\tau$ defines two specific capacities $(p_1,p_2)$.
We say that $\tau$ is \emph{$\eps$-distinct} \emph{(}from a one-stage strategy\emph{)} if $\max{\{p_1,p_2\}}<1-\eps.$
In addition,  $\tau$ is \emph{fully $\eps$-distinct}  if $\eps < p_2<1-\eps$.
\end{definition}

In simple terms, a two-stage screening strategy is distinct from a one-stage strategy if no stage is redundant in the sense that it screens less than a proportion $\eps$ of the elements.
Moreover, a fully distinct strategy sustains a constraint on the second stage such that, on the one hand, the second stage is not redundant and, on the other hand, not all the screening occurs in the second stage.
As stated, these constraints are necessary, otherwise one could effectively eliminate one stage from the two-stage process and generate results that are at least as good as in one-stage screening.

Though restrictive, there is a simple reasoning as to why the mentioned limitations are rather natural and evident.
Generally speaking, the advanced stages in many screening processes are more accurate,  therefore more costly.
For example,  interviews with chief executives or seminars in academic institutions consume a lot of time and effort from busy, time-constrained individuals.
It would be very costly to conduct most of the screening in these stages rather than eliminate most of the applicants in preliminary stages., e.g., when screening CVs.
So, one must limit the capacity of elements that reach these advanced stages, and Definition \ref{Definition - distinct strategies} provides such a limitation.
We discuss this aspect more broadly in Section \ref{Section - extensions}.

The following Proposition \ref{Proposition - FOSD distinct screening} shows that even if one should divert from stationary fixed-threshold strategies,  the dominance of one-stage screening over the two-stage process remains valid, provided that the strategy is $\eps$-distinct and the capacity is sufficiently small.

\begin{proposition} \label{Proposition - FOSD distinct screening}
Fix $\eps \in (0,1/2)$.
For every impact and noise variables $V$ and $N$,   there exists $p_{\eps}>0$ such that for every capacity $p < p_{\eps}$, one-stage screening under $(V,N,p)$ stochastically dominates two-stage screening under $(V,\{ N,N \},p)$ given an $\eps$-distinct strategy.
\end{proposition}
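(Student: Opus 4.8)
The plan is to follow the template of the proof of Theorem \ref{Theorem - FOSD with fixed thresholds}; the new feature is that the two-stage strategy $\tau_2=(t_1,t_2)$ is now an arbitrary $\eps$-distinct one, so two free thresholds must be tracked rather than a single stationary value. Write $G(x):=\Pr(N\ge x)=1-F_N(x)$, let $t$ be the unique one-stage threshold under $(V,N,p)$, and let $(p_1,p_2)$ be the capacities induced by $\tau_2$, so $p_1p_2=p$ and $\max\{p_1,p_2\}<1-\eps$. Let $f_1,F_1$ and $f_2,F_2$ denote the density and CDF of $V_{{\rm SP}_1}(\tau_1)$ and $V_{{\rm SP}_2}(\tau_2)$. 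The capacity constraints make the normalising constant equal $p$ in both processes (it is $p_1p_2=p$ in the two-stage one), so
\begin{equation*}
f_1(v)=\frac{f_V(v)\,G(t-v)}{p},\qquad f_2(v)=\frac{f_V(v)\,G(t_1-v)\,G(t_2-v)}{p},
\end{equation*}
and hence $F_2(v)-F_1(v)=\tfrac1p\int_{\underline{V}}^{v}B(w)\,f_V(w)\,dw$ with $B(v):=G(t_1-v)G(t_2-v)-G(t-v)$. Since both densities integrate to one, $\int_{\underline{V}}^{\overline{V}}B(w)f_V(w)\,dw=0$; therefore it suffices to prove that $B$ exhibits a single crossing from $+$ to $-$ (nonnegative on an initial interval, nonpositive afterward), for then $F_1(v)\le F_2(v)$ for every $v$, which is the asserted stochastic dominance. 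This is exactly the reduction underlying the proof of Theorem \ref{Theorem - FOSD with fixed thresholds}, now with two thresholds.

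First I would pin down the thresholds and the supports. From $p_1=p/p_2>p$ one gets $t_1<t$, and since
\begin{equation*}
\Pr\big(V+N_2\ge t\,\big|\,V+N_1\ge t_1\big)\le\frac{\Pr(V+N_2\ge t)}{p_1}=\frac{p}{p_1}=p_2,
\end{equation*}
one also gets $t_2\le t$. Hence the support of $V_{{\rm SP}_2}(\tau_2)$ extends weakly to the left of that of $V_{{\rm SP}_1}(\tau_1)$, and on that extra piece $B=G(t_1-v)G(t_2-v)\ge0$; moreover just above $v=t-\overline{N}$ one has $G(t-v)\to0^+$ while $G(t_1-v)G(t_2-v)$ stays positive, so $B>0$ there too. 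It remains to show $B$ changes sign exactly once, from $+$ to $-$, on $[\,t-\overline{N},\overline{V}\,]$. (A minor degenerate case, in which $t_1<\overline{V}-\overline{N}$ so the first stage passes every top element with certainty, forces $p_1$ bounded away from $0$ and must be treated separately; it is easy, as the two processes then essentially agree near $\overline{V}$.)

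To obtain the single crossing I would zoom in near $\overline{V}$ as $p\to0$. For small $p$ the relevant mass concentrates in a shrinking neighbourhood of $\overline{V}$, where $f_V$ is essentially constant and $G$ near $\overline{N}$ essentially affine; rescaling that window, the one-stage profile converges to a linear density on an interval $[0,\theta_p]$, while for each admissible configuration of $(p_1,p_2)$ the two-stage profile converges to a density proportional to a product of one or two affine factors which, on collecting leading terms, is again (to leading order) linear on an interval $[0,\ell_p]$ with $\ell_p\ge\theta_p$. Two such profiles cross exactly once, the two-stage one above on the left and below on the right -- precisely the required single crossing of $B$. The hypothesis of $\eps$-distinctness is what renders the estimates uniform: $\max\{p_1,p_2\}\le1-\eps$ keeps both stages non-degenerate, while $p_1,p_2\ge p/(1-\eps)$ controls the rates. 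A compactness argument over the admissible set $\{(p_1,p_2):p_1p_2=p,\ p/(1-\eps)\le p_i,\ \max\{p_1,p_2\}\le1-\eps\}$ then upgrades the limiting single crossing to one holding for all sufficiently small $p$ at once, producing the desired $p_\eps>0$.

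The step I expect to be the main obstacle is precisely this uniformity. Unlike in Theorem \ref{Theorem - FOSD with fixed thresholds}, where a single stationary threshold is tracked, the two thresholds here can scale with $p$ at genuinely different rates: each stage can be ``elite'' ($p_i\to0$) or ``bulk'' ($p_i$ bounded away from $0$), producing limiting two-stage profiles of several shapes, so one must verify the crossing in each regime and -- most delicately -- bound the curvature and remainder terms of $f_V$ near $\overline{V}$ and of $G$ near $\overline{N}$ uniformly across regimes, in particular near the boundary where the two-stage process nearly coincides with the one-stage one and the gap between the two profiles degenerates.
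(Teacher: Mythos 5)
Your reduction is the same as the paper's: identical density formulas, the difference $B(v)=G(t_1-v)G(t_2-v)-G(t-v)$ with zero total integral against $f_V$, the threshold comparisons $t_1<t$, $t_2\le t$, and FOSD from a single $+$\,to\,$-$ crossing. The gap is at the single-crossing step itself, which your sketch asserts rather than proves. First, your description of the rescaled two-stage profile as ``to leading order linear'' is wrong precisely in a regime that $\eps$-distinctness allows, namely $p_1,p_2\to 0$ simultaneously (e.g.\ $p_1=p_2=\sqrt{p}$): there both factors $G(t_1-v)$ and $G(t_2-v)$ vanish inside the zoom window, so the limiting profile is a product of two affine factors, i.e.\ parabolic, not linear --- the very feature the paper highlights when discussing Figure 2. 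Second, and more importantly, ``two such profiles cross exactly once'' is exactly the nontrivial claim: on the overlap of the supports the difference between the (convex, quadratic-type) two-stage profile and the linear one-stage profile is convex and starts positive, and a convex function with zero total signed mass can a priori follow a $+,-,+$ pattern, i.e.\ cross twice, which would be fatal to the claimed dominance. Ruling this out uniformly over the whole admissible set $\{(p_1,p_2):p_1p_2=p,\ \max\{p_1,p_2\}<1-\eps\}$ is the substance of the proposition, and you explicitly defer it as ``the main obstacle.''

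The paper closes that gap with a device your proposal never deploys: assume two crossings, apply the mean-value theorem to $H(t,t_2^1\mid t_1,t_2^2)=G(t_1-t)-G(t_2^1-t)G(t_2^2-t)$ to produce a zero of $H'_t$, let $p\to 0$ so that $t_2^1\to t_1$ and the critical point tends to $t_1-\overline{N}$, and compute $H'_t(t_1-\overline{N},t_1\mid t_1,t_2^2)=-G'(\overline{N})\bigl(1-G(t_2^2-t_1+\overline{N})\bigr)$, which is strictly positive --- a contradiction. Crucially, $\eps$-distinctness enters here in a way absent from your sketch: $p_2<1-\eps$ forces $t_2^2> t_2^1-\overline{N}+\underline{N}+\delta$ for some $\delta>0$, so $G(t_2^2-t_1+\overline{N})$ stays bounded away from $1$ and the derivative remains strictly positive in the limit, uniformly over admissible strategies. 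In your plan $\eps$-distinctness only ``keeps the stages non-degenerate,'' which does not by itself control the boundary configurations (e.g.\ $t_2$ close to $t$ with a bulk first stage) where the two profiles nearly coincide and the crossing count is most delicate. Either import the paper's derivative/MVT argument or carry out the uniform expansion with explicit remainder bounds across all regimes; as written, the proposal does not establish the single crossing.
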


The next proposition shows that one can also extend Proposition \ref{Proposition - FOSD distinct screening} to different noises in the two-stage process.
For this to hold, we consider a fully distinct strategy such that the second stage does not account for most of the process.

\begin{proposition} \label{Proposition - FOSD full distinct screening}
Fix $\eps \in (0,1/2)$,  an impact variable $V$,  and two noise variables $N_1$ and $N_2$. 
There exists $p_{\eps}>0$ such that for every capacity $p < p_{\eps}$, one-stage screening under $(V,N_1,p)$ stochastically dominates two-stage screening under $(V,\{ N_1,N_2 \},p)$ given any fully $\eps$-distinct strategy.
\end{proposition}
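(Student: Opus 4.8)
The plan is to compare the two post‑screening densities directly, show that their graphs cross exactly once (with the one‑stage density higher near the top of the support), and deduce first‑order stochastic dominance by integration. Write ${\rm SP}_1=(V,N_1,p)$ and ${\rm SP}_2=(V,\{N_1,N_2\},p)$, let $W_1\sim V_{{\rm SP}_1}$, and for a fully $\eps$‑distinct strategy $\tau=(t_1,t_2)$ with capacities $(p_1,p_2)$ --- so $p_1p_2=p$ and $\eps<p_2<1-\eps$, hence $p_1=p/p_2\in(p,p/\eps)$ --- let $W_2\sim V_{{\rm SP}_2}(\tau)$. Denoting by $t$ the unique one‑stage threshold, a direct computation gives
\[
f_{W_1}(v)=\frac{f_V(v)\,\Pr(N_1\ge t-v)}{p},\qquad
f_{W_2}(v)=\frac{f_V(v)\,\Pr(N_1\ge t_1-v)\,\Pr(N_2\ge t_2-v)}{p}.
\]
Since $f_V>0$ on the relevant range, the sign of $f_{W_2}(v)-f_{W_1}(v)$ is the sign of
\[
\Delta(v):=\Pr(N_1\ge t_1-v)\,\Pr(N_2\ge t_2-v)-\Pr(N_1\ge t-v),
\]
so it suffices to prove that, for every $p$ below some $p_\eps>0$ and every fully $\eps$‑distinct $\tau$, $\Delta$ changes sign exactly once on $[\underline V,\overline V]$, from positive at smaller $v$ to negative at larger $v$.

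The second step is to pin down thresholds and supports in the elite regime. With $a=f_V(\overline V)$ and $b_1=f_{N_1}(\overline{N_1})$, as $p\to0$ one has $t=\overline V+\overline{N_1}-\delta$ with $\delta=\sqrt{2p/(ab_1)}\,(1+o(1))\to0$, and the first‑stage threshold $t_1=\overline V+\overline{N_1}-\delta_1$ with $\delta_1=\delta/\sqrt{p_2}\,(1+o(1))$; in particular $\delta_1>\delta$ and $\delta_1/\delta$ stays in a compact subset of $(1,\infty)$ determined by $\eps$. Because $W_2$ conditioned on passing only the first stage is concentrated within $\delta_1$ of $\overline V$, calibrating the second stage forces $t_2=\overline V+s_2$ with $\Pr(N_2\ge s_2)=p_2+O(\delta_1)$, so $s_2$ --- and $t_2-v$ for $v$ in the support of $W_2$ --- lies in a compact subinterval of the interior of $[\underline{N_2},\overline{N_2}]$ depending only on $\eps$, and $\Pr(N_2\ge t_2-v)=p_2+o(1)$ uniformly there. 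Hence $f_{W_1}$ is supported on $[\overline V-\delta,\overline V]$, $f_{W_2}$ on $[\overline V-\delta_1,\overline V]$, and $\Delta>0$ on $(\overline V-\delta_1,\overline V-\delta)$, where $f_{W_1}$ vanishes while $f_{W_2}$ does not.

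On the overlap write $v=\overline V-x$, $x\in(0,\delta)$, and set $\Phi(y):=\Pr(N_1\ge\overline{N_1}-y)$, $\Psi(x):=\Pr(N_2\ge t_2-v)$, so that $\Delta=\Phi(\delta_1-x)\Psi(x)-\Phi(\delta-x)$. Using the behaviour of the densities at the endpoints of their supports, $\Phi(y)=b_1y(1+o(1))$ and $\Psi(x)=p_2(1+o(1))$, the leading term of $\Delta$ equals $b_1\big[(1-p_2)x+\delta_1p_2-\delta\big]$: linear in $x$ with slope at least $b_1\eps$, at most $b_1\delta(\sqrt{1-\eps}-1)<0$ at $x=0$, and at least $b_1\eps(\delta_1-\delta)>0$ at $x=\delta$, all remainders $o(\delta)$ uniformly in $p_2$. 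To upgrade this to an honest single crossing I would show $R(x):=\Phi(\delta_1-x)\Psi(x)/\Phi(\delta-x)$ is strictly increasing on $(0,\delta)$: differentiating and using $\Phi(y)=b_1y(1+o(1))$, $\Phi'(y)=b_1(1+o(1))$, $\Psi\ge\eps/2$, and $|\Psi'|\le\sup f_{N_2}$ over the relevant compact set, one gets
\[
R'(x)\ \ge\ \frac{\delta}{\delta-x}\Big[\frac{c_\eps\eps}{2(\delta-x)}-C_\eps C_2\Big](1+o(1)),
\]
with $c_\eps,C_\eps$ coming from the bounds on $\delta_1-\delta$ and $\delta_1$ and $C_2$ from $N_2$; as $\delta-x\le\delta\to0$, the bracket is positive once $p<p_\eps$. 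Then $R$ increasing with $R(0^+)<1$ and $R(\delta^-)=+\infty$ yields a unique $x^\ast\in(0,\delta)$ with $\Delta<0$ on $(0,x^\ast)$ and $\Delta>0$ on $(x^\ast,\delta)$; together with $\Delta>0$ on $(\delta,\delta_1)$ this is the desired single crossing at $v^\ast:=\overline V-x^\ast$. Consequently $F_{W_2}-F_{W_1}=\int_{-\infty}^{\cdot}(f_{W_2}-f_{W_1})$ rises from $0$ up to $v^\ast$ and falls back to $0$ at $\overline V$, so $F_{W_1}\le F_{W_2}$ everywhere, i.e.\ ${\rm SP}_1$ stochastically dominates ${\rm SP}_2$ under $\tau$.

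The main obstacle is exactly the uniformity: the $o(\cdot)$ remainders --- from the non‑constancy of $f_V$ near $\overline V$, of $f_{N_1}$ near $\overline{N_1}$, and of $f_{N_2}$ near $s_2$ --- must be controlled simultaneously for \emph{every} fully $\eps$‑distinct $\tau$ as $p\to0$, and in particular the monotonicity of $R$ must be pushed all the way to the singular endpoint $x=\delta$, where the blow‑up of $1/\Phi(\delta-x)$ has to be shown to beat the decay of $\Psi$. Here the "fully $\eps$‑distinct" hypothesis is what makes everything go through: bounding $p_2$ away from $1$ keeps $\Delta(0)$ strictly negative (slope $b_1(1-p_2)\ge b_1\eps$ and $\delta_1/\delta\ge(1-\eps)^{-1/2}>1$), while bounding $p_2$ away from $0$ keeps $s_2$ in the interior of $N_2$'s support --- so $\Psi,\Psi'$ are uniformly controlled --- and keeps $\delta_1/\delta$ bounded above.
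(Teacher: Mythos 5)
Your proposal is correct in substance and reaches the result by the same high-level strategy as the paper (compare the two post-screening densities, establish a single crossing with the two-stage density higher at the bottom, then integrate), but the technical route is genuinely different. The paper never computes threshold rates: it isolates the support-extension step as a separate lemma (a short-support argument showing that, because $\eps<p_2$, the second-stage threshold cannot truncate the support left by the first stage, so $\underline{V_2}=t_2^1-\overline{N_1}<\underline{V_1}$), and it rules out a second crossing by a qualitative contradiction argument: two crossings would force, via the mean-value theorem, a zero of $H'_t$ at points collapsing to the corner $(t_1-\overline{N_1},t_1)$ as $p\to0$, where a direct computation gives $H'_t\to -G_1'(\overline{N_1})\bigl(1-G_2(\cdot)\bigr)>0$ precisely because full $\eps$-distinctness keeps $G_2$ away from $1$. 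You instead calibrate the thresholds quantitatively ($\delta\sim\sqrt{2p/(ab_1)}$, $\delta_1\sim\delta/\sqrt{p_2}$, $t_2\approx\overline V+s_2$ with $s_2$ an interior quantile of $N_2$ --- this last step is exactly the content of the paper's lemma, obtained differently) and get the single crossing from strict monotonicity of the ratio $R$, the log-derivative bound $\frac{\delta_1-\delta}{(\delta-x)(\delta_1-x)}$ beating the bounded term $\Psi'/\Psi$ uniformly as $p\to0$; your checks of where the fully $\eps$-distinct hypothesis enters ($p_2<1-\eps$ keeps $\Delta(0)<0$ and $\delta_1/\delta$ bounded above $1$; $p_2>\eps$ keeps $s_2$ interior and $\Psi$ bounded below) mirror exactly where the paper uses it. Trade-offs: your route yields explicit rates and the approximate crossing location, and makes uniformity over strategies transparent, but its $o(\cdot)$ expansions require one-sided continuity of $f_V$ at $\overline V$, of $f_{N_1}$ at $\overline{N_1}$, and local boundedness of $f_{N_2}$ --- marginally more regularity than the paper's stated assumptions, though comparable to what its proofs implicitly use ($H$ continuously differentiable, one-sided derivative limits of $G$). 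Two small simplifications available to you: the sign $\Delta(0)<0$ need not come from the expansion at all, since equal total mass plus $f_{W_2}>f_{W_1}=0$ on $(\overline V-\delta_1,\overline V-\delta)$ and monotonicity of $R$ already force $R<1$ near $\overline V$; and once $R$ is strictly increasing with $R(\delta^-)=+\infty$, the precise constant in $\delta\sim\sqrt{2p/(ab_1)}$ is never needed, only $\delta_1/\delta$ bounded in a compact subset of $(1,\infty)$.
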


Let us emphasize that,  in Proposition \ref{Proposition - FOSD full distinct screening},  we do not limit the distribution of $N_2$ relative to $N_1$,  other than the general condition of a strictly positive density on some interval.
This is more than a mere technicality, since it shows that the additional stage can still distort the screening, even if the added stage is very accurate (i.e., even if the noise is relatively mild).
In other words, one does not need to devise esoteric distributions to exemplify our results, but simply extend the screening to the top of the distribution.

On the other hand, if the decision maker wishes to extract  only  a small portion from the bottom of the distribution, then two-stage screening becomes superior relative to the one-stage process.
The intuition is that  once the sample size in both stages is significantly large, the noises' influence becomes limited, which makes the additional stage worthwhile.
The following theorem is a mirror image of Theorem \ref{Theorem - FOSD with fixed thresholds} for cases in which the decision maker focuses on low-level screening (i.e., whenever the capacity is relatively high), showing that two-stage screening dominates the one-stage process.

\begin{theorem} \label{Theorem - two-stage dominance with fixed thresholds}
For every impact and noise variables $V$ and $N$,  there exists $p_0>0$ such that for every capacity $p > p_0$,  two-stage screening under $(V,\{ N,N \},p)$ given a stationary fixed-threshold strategy stochastically dominates one-stage screening under $(V,N,p)$.
\end{theorem}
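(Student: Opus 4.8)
The plan is to write down the two post-screening densities, reduce first-order stochastic dominance to a single-crossing property of these densities, and then verify that property for capacities $p$ close to $1$. Write $\overline{F}_N:=1-F_N$. For capacity $p$ the one-stage threshold $t$ is the unique root of $\int_{\underline V}^{\overline V}f_V(v)\,\overline{F}_N(t-v)\,dv=p$, and the accepted elements have density $g_1(v)=\frac{1}{p} f_V(v)\,\overline{F}_N(t-v)$. Under the stationary fixed-threshold strategy $(t',t')$ an element of value $v$ is accepted exactly when it clears $t'$ in both of the two independent evaluations, so the overall acceptance probability is $\int f_V(v)\,\overline{F}_N(t'-v)^2\,dv$; hence $t'$ is the unique root of $\int f_V(v)\,\overline{F}_N(t'-v)^2\,dv=p$ and the accepted elements have density $g_2(v)=\frac{1}{p} f_V(v)\,\overline{F}_N(t'-v)^2$. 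Since $\overline{F}_N(s)^2\le\overline{F}_N(s)$ and the one-stage acceptance probability is strictly decreasing in the threshold, matching the same $p$ forces $t'<t$; in particular $\overline{F}_N(t'-v)\ge\overline{F}_N(t-v)$ for every $v$.

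Next I would reduce to single crossing. Because $\int(g_2-g_1)=0$ and $f_V>0$ on $[\underline V,\overline V]$, it suffices to show that $\psi(v):=\overline{F}_N(t'-v)^2-\overline{F}_N(t-v)$, which carries the sign of $g_2(v)-g_1(v)$, is $\le0$ on an initial interval $[\underline V,v^{\ast}]$ and $\ge0$ on $[v^{\ast},\overline V]$. This is the standard single-crossing-of-densities sufficient condition for $V_{{\rm SP}_2}$ to first-order stochastically dominate $V_{{\rm SP}_1}$: it makes $\Phi(v):=\int_{\underline V}^{v}f_V\psi$ first nonincreasing and then nondecreasing, with $\Phi(\underline V)=\Phi(\overline V)=0$, hence $\Phi\le0$ throughout, which is exactly $F_{V_{{\rm SP}_2}}\le F_{V_{{\rm SP}_1}}$.

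To locate the crossing I would let $p\to1$. Then $t$ and $t'$ both decrease to $\underline V+\underline N=\underline V-\overline N$; write $t=\underline V-\overline N+\delta$ and $t'=\underline V-\overline N+\delta'$ with $0<\delta'<\delta\to0$. For $p$ close enough to $1$, $t-\overline N<\underline V$, so $g_1$ and $g_2$ are both fully supported on $[\underline V,\overline V]$; moreover $\psi(v)=0$ for $v\ge\underline V+\delta$ (both survival factors have saturated to $1$) and $\psi(v)=F_N(t-v)>0$ for $v\in(\underline V+\delta',\underline V+\delta)$ (the second-stage factor has saturated but the first has not). Hence $\psi$ can be negative only on the shrinking window $[\underline V,\underline V+\delta']$, on which, with $v=\underline V+x$,
\[
\psi(\underline V+x)=F_N(\underline N+\delta-x)-2F_N(\underline N+\delta'-x)+F_N(\underline N+\delta'-x)^2 .
\]
The two capacity equations give, to leading order in $\eta:=1-p$, the relation $\int_0^{\delta}F_N(\underline N+s)\,ds\approx2\int_0^{\delta'}F_N(\underline N+s)\,ds$ between the two self-induced ``slacks''; since $s\mapsto F_N(\underline N+s)$ is nondecreasing and, near $\underline N$, $f_N$ is approximately the constant $f_N(\underline N^{+})>0$, a short computation gives $\delta\approx\sqrt2\,\delta'$ and hence $F_N(\underline N+\delta)<2F_N(\underline N+\delta')$, i.e.\ $\psi(\underline V)<0$. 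Thus $\psi$ is strictly negative at the left end of the window and strictly positive at its right end.

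The main obstacle is to show the sign change on $[\underline V,\underline V+\delta']$ is unique; an oscillating $\psi$ there could spoil $\Phi\le0$. This is precisely where $p$ being close to $1$ is used: on the shrinking window every argument at which $f_N$ is evaluated lies in $[\underline N,\underline N+\delta]$, where $f_N=f_N(\underline N^{+})(1+o(1))$ and $F_N(\underline N+\delta'-x)=o(1)$. Differentiating,
\[
\psi'(\underline V+x)=-f_N(\underline N+\delta-x)+2f_N(\underline N+\delta'-x)\,(1-F_N(\underline N+\delta'-x))=f_N(\underline N^{+})+o(1)>0
\]
uniformly in $x\in[0,\delta']$ once $p$ is close enough to $1$, so $\psi$ has a single zero on the window. (In the uniform benchmark $\psi(\underline V+x)$ is literally a quadratic in $x$ with one root there --- the picture behind Figure~\ref{Figure - one vs two stage screening}(b).) Taking $p_0<1$ to be the value beyond which these local estimates are valid then yields $\Phi\le0$ on $[\underline V,\overline V]$, i.e.\ the asserted stochastic dominance for every $p>p_0$. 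The only delicate ingredient is the regularity of $f_N$ at the boundary of its support; under weaker assumptions one would bound $\int_{\underline V}^{v}f_V\psi$ directly rather than arguing pointwise single crossing.
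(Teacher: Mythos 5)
Your proposal is correct and takes essentially the same route as the paper's proof: the same post-screening densities, the same reduction of stochastic dominance to a single crossing of $f_V(v)\,\overline{F}_N(t-v)$ versus $f_V(v)\,\overline{F}_N(t'-v)^2$, and the same localization near $\underline{V}+\underline{N}$ as $p\to 1$, where uniqueness of the crossing rests on the sign of $-f_N(t-v)+2\,\overline{F}_N(t'-v)f_N(t'-v)\to f_N(\underline{N}^{+})>0$ --- the paper packages this as a mean-value-theorem contradiction in the limit, while you argue directly that $\psi'$ is uniformly positive on the shrinking window, which is the same computation. The only substantive difference is your leading-order estimate $\delta\approx\sqrt{2}\,\delta'$ used to get $\psi(\underline{V})<0$; this step is not needed (it follows from equal total mass plus uniqueness of the crossing and positivity of $\psi$ just right of the window) and, like the paper's own argument, it implicitly uses continuity and positivity of $f_N$ at $\underline{N}$, which you rightly flag.
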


\ignore{
\begin{figure}[ht]
\centering
{\large One-stage versus two-stage screening on the $(V,N)$-plain}
\bigskip

\def\CC{0.3}

\centering
\begin{minipage}{0.45\textwidth}

\centering 
Low acceptance rate

\begin{tikzpicture}
\begin{axis}[axis x line=middle, axis y line=middle, xtick={0},ytick={0}, xmin=0.3, xmax=1.15, ymin=0.3, ymax=1.15]

	\addplot[name path=G, line width=0pt] coordinates{(0.58,1)(1,1)(1,0.4)};	
	\addplot[name path=A, dashed] coordinates{(1,0.1)(1,1)(0.2,1)};	
	\addplot[dashed ] coordinates{(0,1)(0.2,1)};	
	\addplot[dashed] coordinates{(1,0)(1,0.4)};	
	\addplot[name path=B, line width=\CC mm, solid, blue, domain=0.58:1] {1.58-\x};
	\addplot[name path=E, line width=\CC mm, solid, blue, domain=0.68:1] {1.58-\x};
	\addplot[name path=D, line width=\CC mm, dashed, red,domain=0.37:0.58] {1-(\x-0.37)^2};
	\addplot[name path=F, line width=\CC mm, dashed, red, domain=0.68:1] {1-(\x-0.37)^2};
	\addplot[name path=C, line width=\CC mm, dashed, red,domain=0.58:1] {1-(\x-0.37)^2};

	\addplot[fill=gray, opacity=0.6] fill between[of=G and F];
	\addplot[fill=gray, opacity=0.25] fill between[of=F and E];
		
	\node at (axis cs: 1.12,0.34) {$V$};
	\node at (axis cs: 0.34,1.12) {$N$};
	\node at (axis cs: 0.88,0.81) {$t_2$};
	\node at (axis cs: 0.79,0.74) {$t_1$};
	\node at (axis cs: 0.45,0.9) {$A$};
	\node at (axis cs: 0.86,0.535) {$B$};

		\draw [->] (0.86,0.57) -- (0.86,0.74);
		\draw [->] (0.47,0.93) -- (0.58,0.98);

\end{axis}

\end{tikzpicture}

\centering (a)

\end{minipage}
\hspace{1mm}
\begin{minipage}{0.45\textwidth}
\centering 
High acceptance rate
\begin{tikzpicture}
\begin{axis}[axis x line=middle, axis y line=middle, xtick={0},ytick={0}, xmin=0, xmax=1.15, ymin=0, ymax=1.15]

	\addplot[name path=A, dashed] coordinates{(0,0.58)(0,0.95)(0.95,0.95)(0.95,0)(0.58,0)};	
	\addplot[name path=B, line width=\CC mm, solid, blue, domain=0:0.58] {0.58-\x};
 	\addplot[name path=E, line width=\CC mm, solid, blue, domain=0:0.24] {0.58-\x};
	\addplot[name path=D, line width=\CC mm, dashed, red,domain=0:0.43] {1-(\x+0.57)^2};
	\addplot[name path=F, line width=\CC mm, dashed, red, domain=0:0.24] {1-(\x+0.57)^2};

 	\addplot[fill=gray, opacity=0.6] fill between[of=B and A];
	\addplot[fill=white, opacity=0.55] fill between[of=E and F];
		
	\node at (axis cs: 1.12,0.1) {$V$};
	\node at (axis cs: 0.1,1.12) {$N$};
	\node at (axis cs: 0.24,0.5) {$t_2$};
	\node at (axis cs: 0.5,0.2) {$t_1$};
	\node at (axis cs: 0.1,0.85) {$B$};
	\node at (axis cs: 0.75,0.06) {$A$};

		\draw [->] (0.1,0.8) -- (0.1,0.5);
		\draw [->] (0.7,0.06) -- (0.45,0.06);

\end{axis}

\end{tikzpicture}

\centering (b)

\end{minipage}
\caption{\footnotesize Both graphs describe a comparison between one- and two-stage screening: Graph $(a)$ provides such a comparison under a low acceptance rate (a small $p$), whereas Graph $(b)$ provides a similar comparison under a high acceptance rate (i.e., a high $p$).  Under the one-stage screening, depicted by the straight and solid (blue) lines $t_1$, only grey areas sustain the screening.  Under the two-stage screening, depicted by the parabolic and dashed (red) lines $t_2$, the light grey area $B$ is eliminated, while all elements in the white area,$A$ pass the screening.  Note that the transition from Graph $(a)$ to Graph $(b)$ occurs when increasing the acceptance rate, which reduces the threshold criteria and shifts both curves to the right, thus allowing the the RHS crossing to emerge, while eliminating the LHS crossing from the graph.} \label{Figure - two vs one stage screening}

\end{figure}
}

The combination of Theorem \ref{Theorem - FOSD with fixed thresholds} and Theorem  \ref{Theorem - two-stage dominance with fixed thresholds} implies that the 
superiority of one screening method over another, varying in the number of stages, greatly depends on the capacity constraint.
The transition from a sufficiently low  to a sufficiently high capacity, given a stationary fixed-threshold strategy,  exemplifies how an additional screening stage changes from a burden to an advantage. 
A question that remains for future research is whether this transition occurs at a single point such that one-stage screening is superior below a given capacity and inferior above it, or whether this transition occurs in multiple points.

We now proceed to the second part of our analysis to prove that adding a considerable number of stages strictly improves a screening process.

\subsection{Convergence to perfect screening} \label{Subsection - asymptotic screening problems}

The results in Section \ref{Section - one vs two stage screening} may provide the false impression that dynamic screening is inefficient.
In this section we prove that this conclusion is false by showing that a multi-stage process eventually yields the first best outcome.
Theorem \ref{Theorem - two-stage dominance with fixed thresholds} provides some intuition for this,  since a sufficiently high capacity ensures that additional stages only improve the screening,  and that is indeed the case when using multiple stages.
We establish this conclusion through two supporting results.
The first,  Proposition \ref{Proposition - increasing strategy}, concerns infinite screening problems and shows that any increasing strategy (i.e., a strategy under which threshold values can only increase along the stages) generates the first-best outcome.
The second,  Theorem \ref{Theorem - stationary strategies converge to perfect screening}, shows that the two previously mentioned stationary strategies generate posterior distributions that converge, in distribution, to the first best result.

\subsubsection{A Perfect screening strategy} \label{Subsection - infinite screening problems}

In every screening problem, the best the DM can strive for is a screening procedure that yields a result as if there is no noise whatsoever --- a result that we refer to as  \emph{a perfect screening}.
Formally, given a screening problem ${\rm SP}$, a strategy $\tau$ yields a \emph{perfect screening} if $\E[V_{{\rm SP}}(\tau)]  = \E[V| V\geq v_p]$, where $v_p$ denotes the $p$-quantile of $V$ (i.e., $\Pr(V\geq v_p) =p$).
In other words, a \emph{perfect screening} strategy induces an expected value which is equivalent to a screening without noise, while maintaining the same capacity constraint.
Note that a perfect screening of $\tau$, under a capacity of $p$, entails that the two variables  $V_{{\rm SP}}(\tau)$ and $V| \{V\geq v_p\}$ are identically distributed.

Starting with infinite screening problems, we say that an infinite strategy $\tau = (t_1,t_2,\dots) \in \R^{\infty}$ is \emph{increasing} if $t_{k+1} \geq t_k$ for every $k\geq 1$.
That is, an increasing strategy entails that the screening along the stages becomes stricter.
The following proposition shows that, in every infinite screening problem, every increasing strategy produces a perfect screening.

\begin{proposition} \label{Proposition - increasing strategy}
In an infinite screening problem, every increasing strategy induces a perfect screening.
\end{proposition}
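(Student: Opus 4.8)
The plan is to show that under an increasing strategy the sequence of posterior distributions $V_k$ converges (in distribution, hence in expectation since everything is supported on a fixed bounded interval) to $V \mid \{V \geq v_p\}$, and then argue the limiting capacity is exactly $p$. First I would set $q_k = \prod_{i=1}^k p_i$, the cumulative acceptance probability through stage $k$; by the constraint $\prod_{i=1}^\infty p_i = p$, the partial products $q_k$ decrease to $p$, so in particular $p_k \to 1$. The random variable $V_{k+1}$ is $V$ conditioned on the event $E_k = \bigcap_{i=1}^k \{V_i + N_i \geq t_i\}$, but since each $V_i$ is itself $V$ conditioned on earlier survival, $E_k$ is really an event in the product space of $V$ and $N_1,\dots,N_k$, and $\Pr(E_k \mid V = v)$ is a nonincreasing function of $k$ for each fixed $v$ (adding a stage can only shrink the survival set). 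Call this limit $g(v) = \lim_k \Pr(E_k \mid V=v) = \prod_{i=1}^\infty \Pr(V_i + N_i \geq t_i \mid V_i = \cdot)$ appropriately interpreted along the realized conditioning.

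The key step is to identify $g$. Because the strategy is increasing, $t_k$ is nondecreasing and bounded above (it must stay in the bounded support region for the capacity to be positive), so $t_k \uparrow t^\ast$ for some $t^\ast$. For a fixed value $v$: if $v > t^\ast + \overline{N}$, then eventually $v + N_k \geq v - \overline{N}$... more carefully, $v$ survives stage $k$ with probability $\Pr(N_k \geq t_k - v)$; if $v$ is strictly above $t^\ast$ plus the top of the noise support this is $1$ for all large $k$, and the infinite product of terms that are $1$ is positive — in fact I want it to be that $g(v) > 0$. If $v$ is below $t^\ast - \overline{N}$ (i.e. strictly below the bottom of the noise support shifted by $t^\ast$), then $\Pr(N_k \geq t_k - v) \leq \Pr(N \geq t^\ast - v - (t^\ast - t_k))$, and since $t_k \to t^\ast$ these survival probabilities are bounded by something $<1$ eventually, forcing the infinite product to $0$. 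Thus $g(v) = 0$ for $v$ below a threshold and $g(v)>0$ for $v$ above it, and the cutoff is pinned down by the requirement $\int g(v)\, dF_V(v) = \lim q_k = p$, which — using that $F_V$ is continuous and strictly increasing on its support — forces the cutoff to be exactly $v_p$ and, crucially, forces $g$ to equal the indicator $\mathds{1}\{v \geq v_p\}$ almost everywhere (any region where $0 < g(v) < 1$ on a positive-measure set strictly below where $g=1$ would, combined with monotonicity of the survival structure, be incompatible with the total mass being $p$; here I'd use that increasing thresholds eventually kill any $v$ strictly interior to the noise-overlap band). Then $V_{k+1}$ has density proportional to $\Pr(E_k \mid V = v) f_V(v)$, which converges pointwise (dominated convergence, bounded support) to the density of $V \mid \{V \geq v_p\}$, giving convergence in distribution and hence $\E[V_{k+1}] \to \E[V \mid V \geq v_p]$, i.e. perfect screening.

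The main obstacle I anticipate is the second step: rigorously showing $g(v) \in \{0,1\}$ rather than merely $g(v) \in [0,1]$ with the right integral. The subtlety is the band of values $v$ with $t^\ast - \overline{N} \le v \le t^\ast + \overline{N}$, where each single-stage survival probability $\Pr(N_k \ge t_k - v)$ converges to $\Pr(N \ge t^\ast - v) \in (0,1)$; naively the infinite product of numbers strictly between $0$ and $1$ is $0$, which would hand us $g(v)=0$ on that whole band — but then $\int g \, dF_V$ would equal $\Pr(V \ge t^\ast + \overline N)$, and matching this to $p$ pins $t^\ast$ so that the band sits exactly at the top, i.e. $t^\ast + \overline{N} = v_p$... and I need to double-check this is consistent, i.e. that an increasing strategy with $\prod p_i = p$ actually drives $t^\ast$ that high. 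I expect the clean argument is: for any $v < v_p$, $F_V(v) < 1-p$, so survival mass $\Pr(E_k \cap \{V \le v\})$ must vanish (else the total would exceed... no). The honest resolution is that $q_k \to p$ together with $\Pr(E_k \mid V=v)$ nonincreasing forces, by a Fatou/monotone argument, $\Pr(E_k \mid V \le v_p - \delta) \to 0$ for every $\delta>0$ and $\Pr(E_k \mid V \ge v_p + \delta) \to 1$; packaging that cleanly — probably by contradiction, assuming the limiting posterior puts mass below $v_p$ and deriving that infinitely many stages each strip a bounded fraction, contradicting $q_k \to p > 0$ only if done right, or contradicting convergence of $q_k$ to a limit $> p$ — is where the real work sits. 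A convenient technical lemma to isolate: if $t_k \to t^\ast$ and $p_k = \Pr(V_k + N_k \ge t_k) \to 1$, then the support of $V_k$ must be converging to $[v_p, \overline{V}]$, which then drives both $t^\ast$ and the conclusion.
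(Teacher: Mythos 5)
Your proposal is correct and follows essentially the same route as the paper: thresholds increase to a limit $t^\ast$, the survival probability of a value $v$ is the infinite product $\prod_i \Pr(N \geq t_i - v)$, which equals $0$ for $v < t^\ast - \underline{N}$ (factors eventually bounded away from $1$ --- the ``naive'' band argument you hesitate over is exactly the intended one) and equals $1$ for $v \geq t^\ast - \underline{N}$, and the capacity normalization $\int g \, dF_V = \lim_k \prod_{i \leq k} p_i = p$ (a one-line monotone-convergence step) pins $t^\ast - \underline{N} = v_p$. The consistency you worry about is therefore automatic, so the argument you sketched already matches the paper's proof.
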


The motivation behind the statement and proof of Proposition \ref{Proposition - increasing strategy} originates from the fact that suboptimal elements are discarded with some ``patience" on the side of the DM.
So even if the noise is rather disruptive for screening, for example, an almost-binary noise with a large variance,\footnote{This type of noises (and others) are prone to screening biases; see \cite{Lagziel2019} for more details.} a perfect screening remains feasible since suboptimal elements are slowly screened throughout the stages.

\subsubsection{Stationary strategies converge to perfect screening} \label{Subsection - Staionary strategies converge}

After establishing that every infinite and increasing strategy yields a perfect screening, one may want to consider a more practical finite set-up.
In practice,  whether we consider screening job applicants or ``cleansing" datasets, decision makers cannot feasibly commit to infinite screening stages,  which makes the  finite set-up the only practical choice.
In such scenarios, a basic question is whether simple finite strategies converge to a perfect screening.
In this section we tackle this question and provide two stationary strategies that converge to a perfect screening.

As already defined in Subsection \ref{Section - stationary strategies}, we consider the \emph{fixed-threshold strategy} which maintains the same threshold value throughout the stages, and we consider the \emph{fixed-capacity strategy} which dictates the same capacity in all stages.
For each of these strategies we shall prove that the induced distribution converges, in distribution, to a perfect screening outcome.
More formally,  we say that a stationary strategy $\tau$ \emph{converges to a perfect screening} if, for every $k$-stage screening problem ${\rm SP}=(V,\{N\}_{i=1}^k,p)$, we have that $V_{{\rm SP}}(\tau) \xrightarrow[\ \ \ \ ]{d} V|\{V\geq v_p\}$ as $k \to \infty$.
In light of Proposition \ref{Proposition - increasing strategy},  note that the fixed-capacity strategy does not converge to an infinite increasing strategy.
In fact, the limit of the fixed-capacity strategy is not well defined for an infinite screening problem with a capacity constraint of $p\in (0,1)$.
Therefore,  establishing its convergence to a perfect screening outcome requires a separate approach and proof, given in the following Theorem \ref{Theorem - stationary strategies converge to perfect screening}.

\begin{theorem} \label{Theorem - stationary strategies converge to perfect screening}
The fixed-threshold strategy and the fixed-capacity strategy converge to a perfect screening.
\end{theorem}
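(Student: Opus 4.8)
\textbf{Proof plan for Theorem \ref{Theorem - stationary strategies converge to perfect screening}.}
The plan is to handle the two stationary strategies separately, since, as the text notes, the fixed-capacity strategy has no well-defined infinite limit and cannot be subsumed under Proposition \ref{Proposition - increasing strategy}. For the \emph{fixed-threshold strategy}, first I would argue that, for a $k$-stage problem, the common threshold $t^{(k)}$ needed to sustain overall capacity $p$ is strictly increasing in $k$ and converges to the top of the relevant support: intuitively, since each stage with a fixed threshold $t$ passes only a fraction $\Pr(V_i+N\ge t)<1$ (strictly, as long as $t$ is interior), iterating $k$ such stages forces $t^{(k)}\uparrow$. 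Then, crucially, the sequence of thresholds $(t^{(k)},\dots,t^{(k)})$ viewed as the first $k$ coordinates is \emph{constant}, hence trivially increasing, so Proposition \ref{Proposition - increasing strategy}'s underlying mechanism applies \emph{within} each fixed $k$; but to get convergence as $k\to\infty$ I instead track the posterior CDF directly. Write $G_k=F_{V_{{\rm SP}}(\tau)}$ for the $k$-stage fixed-threshold posterior; I would show $G_k$ is obtained from $G_{k-1}$ by one more conditioning step $V\mapsto V\mid\{V+N\ge t^{(k)}\}$ after rescaling, establish that the family $\{G_k\}$ is monotone in the first-order stochastic sense (each extra identical stage, with the threshold readjusted upward, can only push mass up — this is essentially the $k=1$-to-$2$ content of the earlier results specialized to high effective selectivity at each stage, or can be shown directly), and that $G_k$ is bounded above by $F_{V\mid V\ge v_p}$, since perfect screening is the best attainable outcome under capacity $p$ (no strategy can do better than noiseless selection). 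A bounded monotone sequence of CDFs converges; the limit $G_\infty$ is a CDF supported on $[v_p,\overline V]$ with the correct mass, and I would identify $G_\infty=F_{V\mid V\ge v_p}$ by showing the limit is a fixed point of the conditioning operator with $t\to\overline V+\underline N$ (the threshold at which a single stage passes exactly the top), which forces the support to collapse to $\{V\ge v_p\}$ with the original conditional law of $V$ there.

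For the \emph{fixed-capacity strategy}, each of the $k$ stages passes a fraction $p^{1/k}\to 1$. The key observation is that as $k\to\infty$ each stage becomes arbitrarily \emph{mild} (passing almost everything), so the threshold $t_i^{(k)}$ in each stage converges to the bottom of the current support; I would quantify the distortion caused by a single stage with pass-probability $q$ close to $1$ and show it is $O(1-q)$ in an appropriate metric (e.g., Lévy or Kolmogorov distance on the induced CDF change), uniformly over the relevant family of impact distributions (which stay within a compact class because supports are bounded and densities are bounded away from $0$ — here I would need the uniform support/density hypothesis from Section \ref{Section - Model}). Since $1-p^{1/k}=\Theta((\ln(1/p))/k)$, a single step's distortion is $O(1/k)$, but there are $k$ steps, so a naive union bound gives only $O(1)$; the real argument must be that the \emph{cumulative} selection is exactly the fixed overall capacity $p$, and the composition of $k$ mild conditionings with total mass removed equal to $1-p$ converges to the single conditioning on $\{V\ge v_p\}$ — I would prove this by a telescoping/coupling argument showing the composed operator converges to the "sort and take the top $p$" operator as the mesh $\max_i(1-p_i)\to 0$, analogously to how a fine partition of the selection quantile reconstructs the exact quantile. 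Concretely, order the stages' cumulative pass-quantiles $1>q_1>q_1q_2>\cdots>\prod p_i=p$; the $k$-stage posterior is $V$ conditioned on surviving all stages, and I would show the survival set converges (in probability) to $\{V\ge v_p\}$ because each stage removes a thin top-to-bottom "shell" of the current distribution and, with vanishing mesh, re-evaluation noise cannot systematically promote low values across a shrinking gap.

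The main obstacle I expect is the fixed-capacity case: unlike the fixed-threshold case, which plugs almost directly into the monotone-convergence-of-CDFs idea backed by Proposition \ref{Proposition - increasing strategy}, the fixed-capacity strategy is \emph{not} increasing (indeed thresholds may decrease along stages), so I cannot invoke that proposition, and I must rule out the possibility that many mild-but-noisy stages accumulate a non-vanishing bias — precisely the phenomenon Theorem \ref{Theorem - FOSD with fixed thresholds} warns about for two stages. The resolution is that Theorem \ref{Theorem - two-stage dominance with fixed thresholds} already shows that for \emph{sufficiently high} per-stage capacity the extra stage helps rather than hurts, and with $p^{1/k}\to 1$ every stage is eventually in that high-capacity regime; so I would make the quantitative version of Theorem \ref{Theorem - two-stage dominance with fixed thresholds} do the work, showing each late stage moves the posterior CDF monotonically toward $F_{V\mid V\ge v_p}$ (stochastically up) while staying bounded by it, again yielding a monotone bounded — hence convergent — tail of the sequence, with the limit pinned down as before. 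The delicate point is making the "each mild stage's distortion is small and favorable" estimate uniform over the compact class of intermediate impact distributions, which is exactly where the bounded-support, bounded-away-from-zero-density assumptions are used.
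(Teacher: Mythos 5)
Your proposal does not go through as written; both halves rest on claims that are either false or unproven. For the fixed-threshold strategy, the threshold sequence moves in the opposite direction from what you assert: with a common threshold $t$ applied $k$ times, a value $v$ survives with probability $[\Pr(N\ge t-v)]^k$, which decreases in $k$, so to preserve the overall mass $p$ the threshold $t^k$ must \emph{decrease} in $k$; in the paper it is shown to decrease to $v_p+\underline{N}$, not to increase toward the top of the support. More importantly, your convergence mechanism hinges on the claim that the posteriors $G_k$ are FOSD-monotone in $k$ (``each extra identical stage\dots can only push mass up''). This is contradicted by Theorem \ref{Theorem - FOSD with fixed thresholds} itself: for small $p$ the one-stage posterior first-order dominates the two-stage fixed-threshold posterior, so $G_1$ dominates $G_2$ and the monotone-bounded-CDF argument cannot start, nor can the limit be identified that way. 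The paper avoids monotonicity of the posteriors altogether: it writes $f_{V_k}(t)=\tfrac1p f_V(t)\left[\Pr(N\ge t^k-t)\right]^k$, proves $t^k\downarrow v_p+\underline{N}$ (by a normalization contradiction plus the infinite fixed-threshold computation), and then takes the pointwise limit directly: for $t>v_p$ the bracket eventually equals $1$, while for $t<v_p$ it is bounded below $1$ uniformly in $k$, so the $k$-th power vanishes.

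For the fixed-capacity strategy you misidentify the mechanism. Within a $k$-stage problem the thresholds $t_i^k$ strictly \emph{increase} along stages (each stage's posterior FOSD-dominates its prior, so holding the per-stage capacity at $p^{1/k}$ forces the threshold up); they do not drift to the bottom of the current support, and the decisive fact is that a positive fraction of the stages have thresholds near $v_p+\underline{N}$. The paper gets this from a capacity-accounting argument: if $l_k$ is the number of stages with threshold above $v_p+\underline{N}-\varepsilon$, the remaining stages jointly pass at least $p_\varepsilon=\Pr(V>v_p-\varepsilon)>p$, so $p>p_\varepsilon\, p^{l_k/k}$ and hence $l_k\ge ck$ for some $c>0$ independent of $k$; each such stage multiplies the density at any $t<v_p$ by a factor at most $\Pr\left(N\ge\underline{N}+\tfrac{\delta}{2}\right)<1-\tfrac1l$ (with $\delta=v_p-t$), giving $f_{V_k}(t)\le\tfrac1p f_V(t)\left(1-\tfrac1l\right)^{l^2}\to0$. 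Your substitute — a ``quantitative version'' of Theorem \ref{Theorem - two-stage dominance with fixed thresholds} plus a telescoping/coupling argument that compositions of mild conditionings converge to ``take the top $p$'' — is not available: that theorem compares a one-stage with a two-stage problem under a stationary fixed-threshold strategy at high overall capacity, and does not say that adding one mild noisy stage to an arbitrary intermediate posterior moves it stochastically toward $V\mid\{V\ge v_p\}$; as you yourself note, the naive per-stage $O(1/k)$ bound does not accumulate to anything useful. So the central step of your fixed-capacity argument is missing, and some counting argument of the above type is what is needed to close it.
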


The use of stationary strategies accommodates a gradual screening process  such that only considerably low valuations are eliminated, while most elements proceed to subsequent stages.
Though one can devise different non-stationary strategies that converge to a perfect screening,  it is clear that not all strategies will do so.
For example,  consider a screening strategy under which the first-stage threshold is too high, for example,  $t_1 = v_p+\delta$ for some small $\delta>0$, such that the capacity constraint is not violated, yet some values above $v_p$ are partially eliminated in the first stage.
In such a case,  the thresholds in subsequent stages must be lower, and the resulting posterior distribution will not converge to $V|\{V\geq v_p\}$.

\section{Concluding remarks} \label{Section - extensions}

\subsection{The cost of screening}

Though our comparison of screening processes does not explicitly incorporate a cost function, we do not remain naive for this consideration.
The basic intuition is that additional stages and improved accuracy are more costly, so one should balance the two criteria with the superior expected results.
For example,  consider a DM who fixes a screening process given that advanced stages are more accurate (under some metric) than previous ones.
A cost-minimization analysis would typically dictate that most of the screening,  in terms of capacities, is performed in preliminary stages rather than in advanced ones.
In other words, since the screening becomes more costly,  the DM limits the mass of elements that reaches advanced stages.
Therefore,  one can focus on the capacities as an implicit proxy/measure for the needed stages and accuracy.
Interestingly, our results in Subsection \ref{Section - one vs two stage screening}, namely Theorem \ref{Theorem - FOSD with fixed thresholds} and Propositions \ref{Proposition - FOSD distinct screening} and \ref{Proposition - FOSD full distinct screening}, implicitly contain this feature even without an explicit cost function.

More formally, consider a $k$-stage  screening process with the following, illustrative cost function  $\sum_{i} \alpha_i \log\left( \tfrac{\prod_{j\leq i-1} p_j}{p} \right)$, where $p_0=1$ and $\alpha_{i+1} \geq \alpha_i >0$ are accuracy indices for stages $i+1$ and $i$, respectively (i.e., a higher $\alpha$ indicates a more informative screening under some metric, as in \cite{Lehmann1988} or \cite{Lagziel2020a}, among others).
The intuition behind this function is that  the cost of every stage $i$ increases with respect to: (i) the informativeness $\alpha_i$; and (ii) the mass of the inspected elements $ \prod_{j=0}^{i-1} p_j$.
For simplicity,  consider $k=2$ and assume the DM is bounded by a binding budget constraint $C>0$. 
So, to meet  the condition $\alpha_1 \log\left(\tfrac{1}{p}\right) + \alpha_2 \log\left(\tfrac{p_1}{p}\right) \leq C$,  the DM must ensure $p_1$ is sufficiently close to $p$, which in turn pushes $p_2$ towards $1$.
It is straightforward to verify that this constraint becomes even more binding as $\alpha_2$ increases.

One could also take a different perspective altogether, and consider a set-up where the accuracy of every stage/test, and therefore the cost,  depend directly on the induced capacities.
For example, consider the extreme case where an examination is performed in stage $i$ under a capacity constraint of $p_i=1$.
What would be the cost in this case?
Evidently, the cost should be zero because, effectively, there is no screening.
Thus, one can consider another illustrative cost function of the form $-\sum_i \log(p_i)$, so that the cost of not preforming a screening is zero (that is, $\log(p_i)=0$ if and only if $p_i=1$), and it increases as the capacity decreases.
In this example, it is clear that $-\sum_i \log(p_i)=-\log(p)$ and the cost depends only on the overall capacity constraint.
In such cases,  the DM would be solely concerned with comparing the outcomes of the one-stage and the $k$-stage screening processes.

\subsection{Summary}

In this paper we presented an analysis of dynamic screening, showing that the quality of the screening process is not necessarily a monotone function in the number of stages.
Specifically, one can add a single screening stage such that the overall quality of the screening process decreases, whereas a few additional stages significantly improve the process.
There are a few natural questions that arise from our analysis.
First, what exactly occurs when reverting from a single-stage process to a high number of stages? 
It seems reasonable that first-order stochastic dominance does not revert in a single stage, but that there is a slow transition from one posterior distribution to another.
Identifying the posterior distribution's evolution as a function of the number of stages is an important follow-up question, especially for applicative purposes.

Another rather difficult question to tackle in future research relates to the nature of the noise throughout the process.
In our framework,  noises are given exogenously, while in practice noises are endogenously determined according to the the capacities and feasibility constraints (technical and monetary).
Since the technical complexity of such questions is potentially overwhelming, it might be essential (and even more interesting) to adopt a combined empirical-theoretical perspective, where the assumptions and theoretical analysis are based on actual data.



\bibliographystyle{ecta}
\bibliography{./MyCollection}


\appendix

\section{Appendices}

\subsection{Proof of Theorem \ref{Theorem - FOSD with fixed thresholds}}
\begin{proof}
Fix $V,N$,  and  some capacity $p \in (0,1)$.
Denote the one-stage and the two-stage screening problems by ${\rm{SP}}_i=(V,\{N\}_{k=1}^i,p)$,  where $i=1,2$ respectively.
Denote the screening strategy (and threshold) of $\rm{SP}_1$ and $\rm{SP}_2$ by $\tau_1=(t_1)$ and  $\tau_2=(t_2,t_2)$, respectively.
Recall that we consider a two-stage screening process with a fixed-threshold stationary strategy.
Note that
\begin{eqnarray*}
F_{V_1}(t)
& = & \Pr \left(V \leq t | V + N \geq t_{1}\right) \\
& = & \tfrac{1}{p} \Pr \left(V\leq t , V+N \geq t_1 \right) \\
& = & \tfrac{1}{p} \int_{-\infty}^t f_{V}(x) \Pr\left(N \geq t_{1} - x\right) dx. 
\end{eqnarray*}
Thus,
\begin{eqnarray*}
f_{V_1}(t)
& = & \tfrac{1}{p} f_{V}(t) \Pr\left(N \geq t_{1} - t \right) \\
& = & 
\begin{cases}
	\tfrac{1}{p} f_{V}(t) G\left( t_{1} - t \right),  & \text{ for } t_1 - \overline{N} \leq t \leq  \overline{V}, \\
	0, 							         										   & \text{ otherwise,} 
\end{cases}
\end{eqnarray*}
where $G(x) = \Pr(N \geq x)$ is a differentiable decreasing function,  such that $G(x)=0$ if $x \geq \overline{N}$, and $G(x)=1$ if $x \leq \underline{N}$.
Similarly,  
\begin{eqnarray*}
f_{V_2}(t)
& = & \tfrac{1}{p} f_{V}(t) G^2\left( t_{2} - t \right) \\
& = & 
\begin{cases}
	\tfrac{1}{p} f_{V}(t) G^2\left( t_{2} - t \right),  & \text{ for } t_2 - \overline{N} \leq t \leq  \overline{V}, \\
	0, 							         										   & \text{ otherwise.} 
\end{cases}
\end{eqnarray*}
The capacity constraint implies that $t_2 < t_1$, and both values converge to $\overline{V}+\overline{N}$ as $p \to 0$. 
So, we will prove that $V_1$  first-order stochastically dominates (FOSD) $V_2$ by showing that the exists a range of values close to $\overline{V} + \overline{N}$ (equivalently,  a range for $p$ close to zero), such that for every $t_1$ and every $t_2<t_1$ in that interval,  there exists a unique (interior) point $t \in (t_1 - \overline{N},  \overline{V})$ such that $G(t_1-t) = G^2(t_2-t)$, namely a single crossing between the two densities.

To simplify the analysis,  define the function $J(x)$ to be the linear extension of $G$ outside the interval $[\underline{N},\overline{N}]$.
Specifically,
$$
J(x) =
\begin{cases}
	G(x),  & \text{ for } x\in [\underline{N},\overline{N}], \\
	d_{\overline{N}} \left(x-\overline{N}\right) ,  & \text{ for } x\geq \overline{N}, \\
	d_{\underline{N}} \left(x-\underline{N}\right) +1  & \text{ for } x \leq \underline{N},
\end{cases}
$$
where $d_{\overline{N}}= \lim_{x\to \overline{N}^-}\frac{G(x)-G(\overline{N})}{x-\overline{N}}$ and $d_{\underline{N}}= \lim_{x\to \underline{N}^+}\frac{G(x)-G(\underline{N})}{x-\underline{N}}$.
Thus, $J'(\overline{N})<0$ is well defined and strictly negative, since $G$ is strictly decreasing on $[\underline{N},\overline{N}]$,  by the assumption that all variables have strictly positive densities on their compact supports.

Define the function
$$
H(t,t_2)= J(t_1-t) -J^2(t_2-t).
$$
With a slight abuse of notation, we regard $t_2$ as a variable, independent of either $t_1$, or $p$.
Note that $H$ is continuously differentiable,  $H(t_1-\overline{N},t_1)=0$, and 
$H'_t(t_1-\overline{N},t_1) = -J'(\overline{N}) +2J(\overline{N})J'(\overline{N}) = -J'(\overline{N})>0$.
Thus, we can use the implicit function theorem to establish that there exist a continuously differentiable function $T(\cdot)$ defined on an open interval $I_0$ around $t_1$, such that $H(T(t_2),t_2)=0$ for every $t_2 \in I_0$, and $T(t_1)=t_1-\overline{N}$.

Using this result, let us now prove that there exists a neighbourhood $I_1 = (t_1 - \eps, t_1 +\eps) \subseteq I_0$, where $\eps>0$, such that for every $t_2 \in I_1$, there exists a unique solution $t\in T(I_1)$ for $H(t,t_2)=0$.
Clearly, a solution exists according to the previous result, so we can assume, to the contrary, that for every interval $I_1 \subseteq I_0$ there exists $t_2\in I_1$ so that the equation $H(t,t_2)=0$ has at least two distinct solutions in the interval $ T(I_1)$.
If that is indeed the case, then by the mean-value theorem, there exists a point $t' \in T(I_1)$ such that $H'_t(t',t_2)=0$.
Since $H$ is continuously differentiable, the equality $H'_t(t',t_2)=0$ must hold for $\eps\to 0$, which implies that $t' \to t_1 - \overline{N}$ and $t_2 \to t_1$.
However,  $H'_t(t_1-\overline{N},t_1) = -G'(\overline{N})>0$ as already stated, and so we get a contradiction.

To conclude,  we proved that there exists an open interval $I_1= (t_1 - \eps, t_1 +\eps)$ such that for every $t_2 \in I_1$ there exists a unique solution $t\in T(I_1)$ for $J(t_1-t) = J^2(t_2-t)$,  and this solution is continuously differentiable in $t_2$.
We emphasize that both $J(\cdot)$ and $G(\cdot)$ are  independent of $t_1$ and $t_2$, and so is the \emph{length} of $I_1$ (and $I_0$).

We now revert to our original notations where $t_1$ and $t_2$ sustain any given capacity $p \in (0,1)$ in the one-stage and two-stage screening processes, respectively.
Since $t_2 < t_1$ and both values simultaneously converge to $\overline{V}+\overline{N}$ when $p \to 0$,  one can take $p_0>0$ such that $(t_2-\overline{N},\overline{V}) \subset I_1$, for every $p\in (0,p_0)$.
Moreover, the fact that both densities equal zero at the LHS of their supports (and $\underline{V_2} < \underline{V_1}$) suggests that $f_{V_2}(t) > f_{V_1}(t)$, in some $[t_2-\overline{N}, t_p)$, where $t_p$ is the single-crossing point for the chosen capacity.
This single-crossing point exists due to our construction, and due to the fact that both screening processes must obey the same capacity constraint (otherwise, $f_{V_2}(t) > f_{V_1}(t)$ for every $t$ in both supports). 
In other words, for every $t \in [t_2-\overline{N},\overline{V}]$,  it follows that $G(t_1-t)  \lesseqqgtr G^2(t_2-t)$ and $f_{V_1}(t) \lesseqqgtr f_{V_2}(t)$,  for $t \lesseqqgtr t_p$.
This establishes that $V_1$ (first-order) stochastically dominates $V_2$.
\hfill
\end{proof}

\subsection{Proof of Proposition \ref{Proposition - FOSD distinct screening}}
\begin{proof}
This proof follows a construction and arguments similar to those in the proof of Theorem \ref{Theorem - FOSD with fixed thresholds}.
Fix $\eps>0$,  random variables $(V,N)$,  and  some capacity $p \in (0,\eps)$.
Denote the one-stage and the two-stage screening problems by ${\rm{SP}}_i=(V,\{N\}_{k=1}^i,p)$,  where $i=1,2$,  respectively.
Denote the screening strategy (and threshold) of $\rm{SP}_1$ and $\rm{SP}_2$ by $\tau_1=(t_1)$ and  $\tau_2=(t_2^1,t_2^2)$, respectively.
With no loss of generality (as the stages are interchangeable), we assume that $t_2^1 > t_2^2$.
Similarly to the proof of Theorem \ref{Theorem - FOSD with fixed thresholds}, we have
\begin{eqnarray*}
f_{V_1}(t) & = & 
\begin{cases}
	\tfrac{1}{p} f_{V}(t) G\left( t_{1} - t \right),  & \text{ for } t_1 - \overline{N} \leq t \leq  \overline{V}, \\
	0, 							         										   & \text{ otherwise,} 
\end{cases}
\end{eqnarray*}
where $G(x) = \Pr(N \geq x)$ is a differentiable and decreasing function,  such that $G(x)=0$ if $x \geq \overline{N}$, and $G(x)=1$ if $x \leq \underline{N}$, and
\begin{eqnarray*}
f_{V_2}(t)
& = & 
\begin{cases}
	\tfrac{1}{p} f_{V}(t) G\left( t_{2}^1 - t \right)G\left( t_{2}^2 - t \right),  & \text{ for } t_2^1 - \overline{N} \leq t \leq  \overline{V}, \\
	0, 							         										   & \text{ otherwise.} 
\end{cases}
\end{eqnarray*}
The capacity constraint implies that $t_2^1 = \max\{t_2^1,t_2^1\} < t_1$,  which suggests that $\underline{V_2} < \underline{V_1}$ assuming that $\eps$ and $p$ are small such that $\underline{V_2} > \underline{V}$ (that is, assuming that the posteriors' lower bounds are strictly above the minimal level $\underline{V}$). 
Note that  $\underline{V_1} \to \overline{V}$ and $\underline{V_2} \to \overline{V}$ as $p \to 0$,  whereas $\overline{V_2} = \overline{V_1} =\overline{V}$ independently of $p$.
Namely, to reduce the capacity towards zero, the thresholds $ t_1$ and $t_2^1$ need to increase towards the upper bound of $\overline{V} + \overline{N}$.
Therefore,  as established in the proof of Theorem \ref{Theorem - FOSD with fixed thresholds}, the graphs $f_{V_1}(\cdot)$ and $f_{V_2}(\cdot)$ must intersect in at least one interior point, otherwise $f_{V_2}(t) > f_{V_1}(t)$ for every $t \in (t_2^1,\overline{V})$,  which violates the capacity constraint for one of the variables.

We will prove that $V_1$ {FOSD} $V_2$ by showing that there exists a range of values for $t_1$ close to $\overline{V} + \overline{N}$ (equivalently,  a range for $p$ close to zero), such that for every $t_2^1< t_1$ in that interval,  there exists exactly one (interior) point $t \in (t_1 - \overline{N},  \overline{V})$ so that $G(t_1-t) = G(t_2^1-t)G(t_2^2-t)$,  a single crossing between the two densities.

Consider the function $H(t,t_2^1|t_1,t_2^2)= G(t_1-t) - G(t_2^1-t)G(t_2^2-t)$. 
Again, with some abuse of notation, we allow for $t_1$ and $t_2^1$ to vary,  independently of $p$.
Assume, to the contrary, that for every interval $I=(\overline{V}+\overline{N}-c,\overline{V}+\overline{N})$ where $c>0$ and $t_1 \in I$, there exist  $t_2^1 \in (\overline{V}+\overline{N}-c,t_1)$  and a $t_2^2 < t_2^1$ which maintain the $\eps$-distinct property,  so that the equation $H(t,t_2^1|t_1,t_2^2)=0$ has at least two distinct solutions w.r.t.\ $t$ in the interval $I_1 = (t_1-\overline{N},\overline{V})$.
The fact that $t_2^2$ maintains the $\eps$-distinct property suggests that $p_2 = \Pr(X+N>t_2^2)< 1-\eps$,  where $X\sim V|\{V+N\geq t_2^1\}$.
Thus, there exists $\delta >0$ such that 
$$
t_2^2 > \underline{X}+\underline{N}+ \delta = t_2^1-\overline{N}+ \underline{N}+ \delta,
$$
since $\underline{X}= t_2^1-\overline{N}$, and 
\begin{equation} \label{Eq- lower bound on G}
G(t_2^2-t_1+\overline{N})< G(t_2^1-t_1+ \underline{N}+ \delta).
\end{equation}
By the mean-value theorem, there exists a point $t'\in I_1$ such that $H'_t(t',t_2^1|t_1,t_2^2)=0$.
Since $H$ is continuously differentiable, the equality $H'_t(t',t_2^1|t_1,t_2^2)=0$ must hold for $c\to 0$,  or equivalently, for $t' \to t_1 - \overline{N}$ and $t_2^1 \to t_1$.
However,  
\begin{eqnarray*}
H'_t(t_1-\overline{N},t_1|t_1,t_2^2) 
& = & -G'(\overline{N}) + G'(\overline{N})G(t_2^2-t_1+\overline{N}) + G(\overline{N})G'(t_2^2-t_1+\overline{N}) \\
& = & -G'(\overline{N}) + G'(\overline{N})G(t_2^2-t_1+\overline{N})  \\
& > & -G'(\overline{N}) + G'(\overline{N})G(t_2^1-t_1+ \underline{N}+ \delta) >0,
\end{eqnarray*}
where the first inequality follows from Eq.\ \eqref{Eq- lower bound on G} and the second inequality holds because $t_2^1$ is arbitrarily close to $t_1$ and $G(t_2^1-t_1+ \underline{N}+ \delta)<1$.

To conclude,  we proved that there exists an open interval $I=(\overline{V}+\overline{N}-c,\overline{V}+\overline{N})$,  where $c>0$ and $t_1\in I$, such that for every $t_2^1 \in (\overline{V}+\overline{N}-c,t_1)$ and $t_2^2< t_2^1$ which is $\eps$-distinct, there exists a unique solution $t \in (t_1 - \overline{N},  \overline{V})$ such that $G(t_1-t) = G(t_2^1-t)G(t_2^2-t)$.
In other words,  there exists $p_{\eps}>0$ so that for every $p \in (0,p_{\eps})$, the densities $f_{V_1}(t)$ and $f_{V_2}(t)$ coincide exactly once on their support (while $\underline{V_2} < \underline{V_1}$), thus establishing that $V_1$ (first-order) stochastically dominates $V_2$, as needed.
\hfill
\end{proof}

\subsection{Proof of Proposition \ref{Proposition - FOSD full distinct screening}}
\begin{proof}
This proof follows the pattern of  the proof of Proposition \ref{Proposition - FOSD distinct screening}, with the additional requirement to show that the lower bound $\underline{V_2}$ of the conditional impact variable under the two-stage screening is indeed lower than the one under the one-stage screening, i.e., $\underline{V_2} < \underline{V_1}$.

Fix $\eps \in (0,1)$ and two screening problems, ${\rm SP}_1=(V,\{N_1\},p)$ and ${\rm SP}_2=(V,\{N_1,N_2\},p)$, where $p<\eps$.
We start by proving that  $\underline{V_2} < \underline{V_1}$.
For this purpose consider the following claim

Note that $\tau=(t_2^1,t_2^2)$ is fully $\eps$-distinct, thus $p_2 \in (\eps,1-\eps)$ while $p_1 \cdot p_2 = p$.
Therefore, if $p$ is sufficiently small (close to zero; specifically, $p \ll \eps$),  then $p_1 \to 0$ and $t_2^1 \to \overline{V}+\overline{N}$.
So, from some point onward (namely, as long as $t_2^1$ is sufficiently close to $\overline{V}+\overline{N_1}$), the support of $V|\{V+N_1 \geq t_2^1\}$ shrinks towards $\overline{V}$,
and according to Lemma \ref{Lemma - short support, bounded threshold} below,  this suggests that $t_2^2$ is bounded away from $\underline{X} + \overline{N_1}$, where $x\sim V|\{V+N_1 \geq t_2^1\}$.
In other words, for every sufficiently small $p$, the lower bound $\underline{V_2}$ of $V_2$ is $t_2^1- \overline{N_1}$.
Recall that $t_2^1 < t_1$ (otherwise, $p_1<p$), so we conclude that for sufficiently small capacities $\underline{V_2} < \underline{V_1}$, as already stated.
So,  from this point onward, consider $p < \eps$ such that $|t_2^1 - (\overline{V}+\overline{N_1})|< \delta$, where $\Pr(N_1 \geq \overline{N_1}-\delta)< \eps$ for every fully $\eps$-distinct two-stage screening strategy.

A similar computation to the one presented in the proofs of Theorem \ref{Theorem - FOSD with fixed thresholds} and Proposition \ref{Proposition - FOSD distinct screening}  yields
\begin{eqnarray*}
f_{V_1}(t) & = & 
\begin{cases}
	\tfrac{1}{p} f_{V}(t) G_1\left( t_{1} - t \right),  & \text{ for } t_1 - \overline{N_1} \leq t \leq  \overline{V}, \\
	0, 							         										   & \text{ otherwise,} 
\end{cases}
\end{eqnarray*}
and
\begin{eqnarray*}
f_{V_2}(t)
& = & 
\begin{cases}
	\tfrac{1}{p} f_{V}(t) G_1\left( t_{2}^1 - t \right)G_2\left( t_{2}^2 - t \right),  & \text{ for } t_2^1 - \overline{N_1} \leq t \leq  \overline{V}, \\
	0, 							         										   & \text{ otherwise,} 
\end{cases}
\end{eqnarray*}
where $G_i(x) = \Pr(N_i \geq x)$ for $i=1,2$, with the same properties as discussed in the previous proofs.

The two screening processes can simultaneously support the capacity $p$ only if the two densities intersect at least once.
Let us now show that they do so exactly once.
Consider the function $H(t,t_2^1|t_1,t_2^2)= G_1(t_1-t) - G_1(t_2^1-t)G_2(t_2^2-t)$. 
With some abuse of notation, we allow for $t_1$ and $t_2^1$ to vary,  independently of $p$, while $t_2^2$ maintains the fully $\eps$-distinct property.

Assume, to the contrary, that for every interval $I=(\overline{V}+\overline{N_1}-c,\overline{V}+\overline{N_1})$, where $c>0$ and $t_1 \in I$, there exists a $t_2^1 \in (\overline{V}+\overline{N_1}-c,t_1)$ and a  $t_2^2 < t_2^1$ which maintains the fully $\eps$-distinct property,  such that the equation $H(t,t_2^1|t_1,t_2^2)=0$ has at least two distinct solutions w.r.t.\ $t$ in the interval $I_1 = (t_1-\overline{N_1},\overline{V})$.
By the mean-value theorem, there exists a point $t'\in I_1$ so that $H'_t(t',t_2^1|t_1,t_2^2)=0$.
Since $H$ is continuously differentiable, the equality $H'_t(t',t_2^1|t_1,t_2^2)=0$ must hold for $c\to 0$,  or equivalently, for $t' \to t_1 - \overline{N_1}$ and $t_2^1 \to t_1$.
However,  
\begin{eqnarray*}
H'_t(t_1-\overline{N},t_1|t_1,t_2^2) 
& = & -G_1'(\overline{N_1}) + G_1'(\overline{N_1})G_2(t_2^2-t_1+\overline{N_1}) + G_1(\overline{N_1})G_2'(t_2^2-t_1+\overline{N_1}) \\
& = & -G_1'(\overline{N_1}) + G_1'(\overline{N_1})G_2(t_2^2-t_1+\overline{N_1})  >0,
\end{eqnarray*}
where the inequality follows from fact that $\tau=(t_2^1,t_2^2)$ is a fully $\eps$-distinct strategy, and as such $G_2(t_2^2-t_1+\overline{N_1})$ is bounded away from $0$ and $1$.
Hence,  $H'_t(\cdot,t_2^1|t_1,t_2^2)>0$ remains strictly positive as $p$ tends to zero, and there is a single crossing between the two densities $f_{V_2}$ and $f_{V_1}$.
Since $\underline{V_2} < \underline{V_1}$, we conclude that $V_1$ FOSD $V_2$, as needed.
\hfill
\end{proof}

\begin{lemma} \label{Lemma - short support, bounded threshold}
For every noise variable $N$ and every $\eps>0$,  there exists $c>0$ such that for every impact variable $V$ supported on an interval $I$ of length $c'<c$,  the inequality $\Pr(V+N\geq t) >\eps$ implies that $t< \underline{V}+\overline{N}$.
\end{lemma}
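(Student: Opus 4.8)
The plan is to argue by contradiction: suppose no such $c$ works, so there is a sequence of impact variables $V_n$, supported on intervals $I_n$ of length $c_n \to 0$, together with thresholds $t_n \geq \underline{V_n} + \overline{N}$ such that $\Pr(V_n + N \geq t_n) > \eps$. Since $t_n \geq \underline{V_n} + \overline{N}$, we have $V_n + N \geq t_n$ only if $V_n \geq t_n - \overline{N} \geq \underline{V_n}$, which is automatic, but more usefully it forces $N \geq t_n - V_n$. Because $V_n$ lives in an interval of length $c_n$, we have $V_n \leq \underline{V_n} + c_n \leq t_n - \overline{N} + c_n$, hence on the event $\{V_n + N \geq t_n\}$ we need $N \geq t_n - V_n \geq \overline{N} - c_n$. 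Therefore $\Pr(V_n + N \geq t_n) \leq \Pr(N \geq \overline{N} - c_n)$.

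The contradiction then comes from the continuity of the CDF of $N$ at its upper endpoint. Since $N$ has a density supported on $[\underline{N}, \overline{N}]$, the function $c \mapsto \Pr(N \geq \overline{N} - c)$ is continuous and tends to $0$ as $c \to 0^+$ (indeed $\Pr(N \geq \overline{N}) = 0$ because the distribution is non-atomic). Hence for $n$ large enough, $\Pr(N \geq \overline{N} - c_n) < \eps$, contradicting $\Pr(V_n + N \geq t_n) > \eps$. This shows that one can pick $c$ small enough that $\Pr(N \geq \overline{N} - c) \leq \eps$, and for any $V$ supported on an interval of length $c' < c$, the inequality $\Pr(V + N \geq t) > \eps$ rules out $t \geq \underline{V} + \overline{N}$, giving $t < \underline{V} + \overline{N}$ as claimed.

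I expect the argument to be essentially routine; the only point requiring a little care is the direction of the bound — one must observe that the \emph{worst case} for the event $\{V + N \geq t\}$ (the one maximizing its probability) pushes $V$ to the top of its support, i.e. $V$ as large as possible, and that even this best case only contributes mass $\Pr(N \geq \overline{N} - c')$. A clean way to phrase this without sequences is: for any $t \geq \underline{V} + \overline{N}$ and any $v$ in the support of $V$, $\Pr(N \geq t - v) \leq \Pr(N \geq \overline{N} + \underline{V} - v) \leq \Pr(N \geq \overline{N} - c')$ since $v \leq \underline{V} + c'$; integrating against $f_V$ gives $\Pr(V + N \geq t) \leq \Pr(N \geq \overline{N} - c')$, and choosing $c$ so that $\Pr(N \geq \overline{N} - c) \leq \eps$ finishes the proof. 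The main (minor) obstacle is simply making sure the endpoint continuity is invoked correctly given the non-atomic, compactly supported assumption on $N$.
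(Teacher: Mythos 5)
Your proof is correct and takes essentially the same route as the paper: for $t \geq \underline{V}+\overline{N}$ you bound $\Pr(V+N\geq t)\leq \Pr\left(N\geq \overline{N}-c'\right)$ and choose $c$ so that this tail is at most $\eps$, which is exactly the paper's computation, except that the paper fixes $c=\overline{N}-N_0$ with $\Pr(N\geq N_0)=\eps$ directly rather than invoking continuity at the endpoint. No gaps; nothing further is needed.
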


\begin{proof}
Fix a noise variable $N$.
Take $N_0 \in {\rm Supp}(N)$ such that $\Pr(N \geq N_0) = \eps$.
Fix $c= \overline{N}-N_0$ and fix $a \in \mathbb{R}$.
Consider the interval $I=[a, a +\overline{N}- N_0]$ and an impact variable $V$ supported on $I$.
Take $t= \underline{V}+ \overline{N}= a+ \overline{N}$ and compute $\Pr(V+N \geq t)$ as follows:
\begin{eqnarray*} 
\Pr(V+N \geq a+\overline{N}) 
& = &  \int_{a}^{a+\overline{N}- N_0} f_{V}(x) \Pr\left(N \geq a+\overline{N} - x\right) dx \\
& < &  \int_{a}^{a+\overline{N}- N_0} f_{V}(x) \Pr\left(N \geq N_0 \right) dx \\
& = & \eps \int_{a}^{a+\overline{N}- N_0} f_{V}(x)  dx = \eps .
\end{eqnarray*}
Thus, for every $V$ supported on an interval whose length is $\overline{N}- N_0$,  the inequality $\Pr(V+N\geq t) >\eps$ implies that $t< \underline{V}+\overline{N}$.
Since the same computation holds for any interval of length smaller than $\overline{N}-N_0$, the statement holds.
\hfill
\end{proof}

\subsection{Proof of Theorem \ref{Theorem - two-stage dominance with fixed thresholds} }
\begin{proof}
Fix $V,N$, and a capacity $p > \Pr(V+N > \min \{  \overline{V} + \underline{N} ,  \underline{V} + \overline{N}\})$.
For every $i=1,2,$ denote the $i$-stage screening problem by ${\rm{SP}}_i=(V,\{N\}_{k=1}^i,p)$, and denote the screening strategy and threshold of $\rm{SP}_1$ and $\rm{SP}_2$ by $\tau_1=(t_1)$ and  $\tau_2=(t_2,t_2)$, respectively.
Note that for the given capacity (and for higher capacities as well),  $t_1 < \min \{  \overline{V} + \underline{N} ,  \underline{V} + \overline{N}\}$.
In addition, we know that $t_2 < t_1$ and both thresholds converge to $\underline{V}+\underline{N}$ as $p \to 1$.

Consider the construction presented in the proof of Theorem \ref{Theorem - FOSD with fixed thresholds}.
Under ${\rm SP}_1$ we get
\begin{eqnarray*}
f_{V_1}(t) & = & 
\begin{cases}
	\tfrac{1}{p} f_{V}(t) G\left( t_{1} - t \right),  & \text{ for }  \underline{V} \leq t \leq  t_1 - \underline{N}, \\
    \tfrac{1}{p} f_{V}(t),  & \text{ for }  t_1 - \underline{N} \leq t \leq  \overline{V}, \\
	0, 							         										   & \text{ otherwise,} 
\end{cases}
\end{eqnarray*}
where $G(x) = \Pr(N \geq x)$, and under ${\rm SP}_2$ we get
\begin{eqnarray*}
f_{V_2}(t) & = & 
\begin{cases}
	\tfrac{1}{p} f_{V}(t) G^2\left( t_{2} - t \right),  & \text{ for }  \underline{V} \leq t \leq  t_2 - \underline{N}, \\
    \tfrac{1}{p} f_{V}(t),  & \text{ for }  t_2 - \underline{N} \leq t \leq  \overline{V}, \\
	0, 							         										   & \text{ otherwise,} 
\end{cases}
\end{eqnarray*}
Since $t_2 < t_1$ along with the condition that densities are strictly positive, we can deduce that $f_{V_2}(t) \geq f_{V_1}(t)$ for every $t\in (t_2-\underline{N},\overline{V})$, and the inequality is strict for every $t\in (t_2-\underline{N}, t_1 - \underline{N})$.
Thus, we can establish the stochastic dominance of $V_2$ over $V_1$ by showing that there is a single crossing between the two densities in the interval $(\underline{V},t_2-\underline{N})$.

The fact that at least one crossing exists is trivial due to the capacity constraint.
Thus, let us assume, by contradiction, that there is more than one crossing independently of the high capacity $p$.
That is, we assume that for every $t_1 <\min \{  \overline{V} + \underline{N} ,  \underline{V} + \overline{N}\} $, there exists $t_2 \in (\underline{N}+\underline{V},t_1)$ such that the equation $H(t|t_1,t_2)=G(t_1-t)-G^2(t_2-t)=0$ has two solutions in the interval $(\underline{V},t_2-\underline{N})$.

Similarly to the proof of Theorem \ref{Theorem - FOSD with fixed thresholds},  the mean-value theorem ensure that there exists $t' \in (\underline{V},t_2-\underline{N})$ such that $H'(t|t_1,t_2)=-G'(t_1-t)+2G(t_2-t)G'(t_2-t)=0$.
By continuity,  this holds for every $t_1$ and $t_2$ close to $\underline{V}+ \underline{N}$,  so we can take the limit $t_i \to \underline{V}+\underline{N}$ and $t'\to \underline{V}$, which yields 
$$
0=-G'(\underline{N})+2G(\underline{N})G'(\underline{N})= -G'(\underline{N})+2G'(\underline{N})= G'(\underline{N}) < 0;
$$
we reached a contradiction.

Therefore, there exists $ p  > \Pr(V+N > \min \{  \overline{V} + \underline{N} ,  \underline{V} + \overline{N}\}) $ such that the densities  $f_{V_2}(t)$ and $f_{V_1}(t)$ intersect only once in the interval $(\underline{V},t_2-\underline{N})$, and this establishes that $V_2$ first-order stochastically dominates $V_1$, as needed.
\hfill
\end{proof}

\subsection{Proof of Proposition \ref{Proposition - increasing strategy}}
\begin{proof}
Fix an infinite screening problem ${\rm SP} = \left(V, N, p \right)$ with an increasing strategy $\tau = (t_1,t_2,\dots)$.
The strategy $\tau$ maintains, by definition, the stated capacity $p \in (0,1)$.
Therefore, $t_k \leq \overline{V}+\overline{N}$ for every $k$.
So, the sequence $(t_1,t_2,\dots)$ converges to some $t_{\infty} \in \R$.

Consider the stage-$k$ conditional distribution of $V$ in some $t\in \R$, denoted $F_{V_k}(t)$. 
Clearly,
\begin{eqnarray*}
F_{V_k}(t)
& = & \Pr \left(V_{k-1} \leq t | V_{k-1} + N_{k-1} \geq t_{k-1}\right) \\
& = & \tfrac{1}{p_k} \Pr \left(V_{k-1} \leq t , V_{k-1} + N_{k-1} \geq t_{k-1}\right) \\
& = & \tfrac{1}{p_k} \int_{-\infty}^t f_{V_{k-1}}(s) \Pr\left(N_{k-1} \geq t_{k-1} - s\right) ds \\
& = & \tfrac{1}{p_k} \int_{-\infty}^t f_{V_{k-1}}(s) \Pr\left(N \geq t_{k-1} - s\right) ds.
\end{eqnarray*}
Thus,
\begin{eqnarray*}
f_{V_k}(t) 
& = & \tfrac{1}{p_k}  f_{V_{k-1}}(t) \Pr\left(N \geq t_{k-1} - t\right) \\
& = & \tfrac{1}{p_{k-1}p_k}  f_{V_{k-2}}(t) \Pr\left(N \geq t_{k-2} - t\right) \Pr\left(N \geq t_{k-1} - t\right) \\
& = & \tfrac{1}{\prod_{i=1}^k p_i}  f_{V}(t) \prod_{i=1}^k \Pr\left(N \geq t_{i} - t\right),
\end{eqnarray*}
and 
$$f_{V_{{\rm SP}}(\tau)}(t) = \tfrac{1}{p}  f_{V}(t) \prod_{i=1}^{\infty} \Pr\left(N \geq t_{i} - t\right).$$

Fix $t\in \R$ and recall that $(t_i)_{i\in N}$ converges monotonically  to $t_{\infty}$.
If $t < t_{\infty} - \underline{N}$, then there exists $N_0$ such that $t < t_i - \underline{N}$ for every $i> N_0$, and  $\Pr\left(N \geq t_{i} - t \right) <1$.
Hence, $\prod_{i=1}^{\infty} \Pr\left(N \geq t_{i} - t\right)=0$.
Otherwise, $t \geq t_{\infty} - \underline{N}$, and $t \geq t_i - \underline{N}$ for every $i \in \N$.
This implies that $\Pr\left(N \geq t_{i} - t \right) =1$ for every $i \in \N$, and $\prod_{i=1}^{\infty} \Pr\left(N \geq t_{i} - t\right)=1$.
We thus conclude that
$$
f_{V_{{\rm SP}}(\tau)}(t)=
\begin{cases}
	0, 							& \text{ for every } t < t_{\infty} - \underline{N}, \\
	\tfrac{1}{p}f_{V}(t),  & \text{ for every } t \geq t_{\infty} - \underline{N}.
\end{cases}
$$
Clearly,
$$
\int_{\R}f_{V_{{\rm SP}}(\tau)}(t) dt = \int_{t_{\infty} - \underline{N}}^{\overline{V}}\tfrac{1}{p}f_{V}(t) dt =1,
$$
so $t_{\infty} = v_p + \underline{N}$, and $\tau$ induces a perfect screening.
\hfill
\end{proof}

\subsection{Proof of Theorem \ref{Theorem - stationary strategies converge to perfect screening}}
\begin{proof}
Fix an impact variable $V$, a noise variable $N$, and a capacity $p$.
For every $k\in \N$, let ${\rm SP}_k = (V,\{N\}_{i=1}^k,p)$ denote a $k$-stage screening problem, where $(V,N,p)$ are fixed, and consider the fixed-threshold strategy, $\tau = (t^k,\dots,t^k)$.
That is, $t^k$ denotes the fixed threshold of $\tau$ in ${\rm SP}_k$.

We begin by establishing that  $\{t^k\}_{k\geq 1}$ is a decreasing sequence that converges to  $v_p + \underline{N}$ as $k\to \infty$.
As shown in the proof of Theorem \ref{Proposition - increasing strategy}, for every $k \in \N$,
\begin{eqnarray*}
f_{V_k}(t) 
& = & \tfrac{1}{p}  f_{V}(t) \prod_{i=1}^k \Pr\left(N \geq t^k - t\right) = \tfrac{1}{p}  f_{V}(t) \left[ \Pr\left(N \geq t^k - t\right)\right]^k.
\end{eqnarray*}
So, if $t^k \leq t^{k+1}$, then 
\begin{eqnarray*}
f_{V_k}(t) 
& = & \tfrac{1}{p}  f_{V}(t) \left[ \Pr\left(N \geq t^k - t\right)\right]^k  \geq \tfrac{1}{p}  f_{V}(t) \left[ \Pr\left(N \geq t^{k+1} - t\right)\right]^{k+1} = f_{V_{k+1}}(t),
\end{eqnarray*}
and the inequality is strict for every $t$ where $f_{V}(t) \Pr\left(N \geq t^k - t\right)  >0$ and $ \Pr\left(N \geq t^k - t\right) < 1$.
Thus, we get a contradiction since $V_{k+1}$ is not normalized, and we deduce that $t^{k+1} < t^k$ for every $k\in \N$.
In addition, $t^k \geq \underline{V} + \underline{N}$ for every $k$; otherwise, there will be no screening whatsoever and the capacity constraint would be violated.
Hence, we conclude that $\{t^k\}_{k\geq 1}$ is a decreasing and bounded sequence, and so converges. 

To see that $\{t^k\}_{k\geq 1}$ converges to  $v_p + \underline{N}$, consider the infinite sequence $(v_p + \underline{N},v_p + \underline{N},\dots)$.
The induced conditional distribution of $V$ in the respective (infinite) screening problem ${\rm SP}$ is
\begin{eqnarray*}
f_{V_{\rm SP}}(t) 
& = & \tfrac{1}{p}  f_{V}(t) \prod_{i=1}^{\infty} \Pr\left(N \geq v_p + \underline{N} - t\right) =
\begin{cases}
	0, 	& \text{ for every } t < v_p, \\
	\tfrac{1}{p}f_{V}(t),  & \text{ for every } t \geq v_p.
\end{cases}
\end{eqnarray*}
If one would take an infinite strategy $(t,t,\dots)$ such that $t\neq v_p + \underline{N}$, the last computation shows that the induced capacity would not be $p$. 
So, we conclude that $\lim_{k\to \infty} t^k = v_p + \underline{N}$.

Finally, we need to prove that $V_k$ converges in distribution to $V|\{V\geq v_p\}$. 
Fix $t \in \R$.
If $t> v_p$, then there exists $k_t \in \N$ such that $t^k - \underline{N} < t$ for every $k\geq k_t$, and $\Pr\left(N \geq t^k - t\right)=1$.
Thus,  $f_{V_k}(t) =  \tfrac{1}{p}  f_{V}(t)$ for every $t> v_p$ and every $k\geq k_t$.
On the other hand, if $t < v_p$, then $\Pr(N \geq t^k - t) < \Pr (N \geq v_p + \underline{N} - t) <1$ for every $k\in \N$, which yields
$$
f_{V_k}(t) =  \tfrac{1}{p}  f_{V}(t) \left[ \Pr\left(N \geq t^k - t \right)\right]^k <  \tfrac{1}{p}  f_{V}(t) \left[  \Pr (N \geq v_p + \underline{N} - t)  \right]^k \to 0 \ \ \text{as} \ k \to \infty,
$$
as needed.

We now move on to the fixed-capacity strategy.
Again, for every $k\in \N$, let ${\rm SP}_k = (V,\{N\}_{i=1}^k,p)$ denote a $k$-stage screening problem where $(V,N,p)$ are fixed, and consider the fixed-capacity strategy, $\tau = (t_1^k,t_2^k,\dots, t_k^k)$.
One can easily show that, upon a screening stage, the induced posterior distribution of the impact variable first-order stochastically dominates the  prior distribution (in fact, the two distributions sustain the monotone likelihood-ratio property).
Thus, to maintain a capacity of $p^{1/k}$ in every stage, the thresholds need to strictly increase.

We start by proving that for every $\eps>0$ and every $l\in \N$ there exists $K_{l,\eps} \in \N$ such that for every $k> K_{l,\eps}$ there are at least $l^2$ stages where the threshold is strictly higher than $v_p + \underline{N} -\eps$, i.e., $| \{i : \ t_i^k > v_p+\underline{N} - \eps \} | \geq l^2$.

Fix $\eps>0$,  $l\in \N$, and  let $l_k = | \{i : \ t_i^k > v_p+\underline{N} - \eps \} |$ denote the number of stages in which the threshold is above  $v_p + \underline{N} -\eps$.
Note that the capacity up to stage $k-l_k$ is at least $\Pr(V > v_p - \eps)$ since values above $v_p-\eps$ pass the screening in these stages.
Denote $p_{\eps} = \Pr(V > v_p - \eps)$ and note that $1> p_{\eps} > p$.
In the remaining $l_k$ stages, the capacity is fixed to $p^{l_k / k}$ according to $\tau$.
Thus, when considering the overall capacity $p$, we get $p > p_{\eps} \cdot p^{l_k/k}$, which translates to $\tfrac{l_k}{k} > 1- \tfrac{\ln (p_{\eps}) }{ \ln (p)} >0$.
In other words, there exists $c>0$, which is independent of $k$, such that $l_k > c k$.
One can now fix $K_{l,\eps} = \tfrac{l^2}{c}$ to get the needed result.

Let us now use this claim to prove that $\tau$ converges to a perfect screening.
Fix $t< v_p$. 
Note $\delta =v_p - t$.
Fix $l>0$ so that $\Pr \left(N \geq \underline{N}+ \tfrac{\delta}{2} \right) < 1-\tfrac{1}{l}$. 
Take $K_{l,{\delta}/{2}}$ such the previous claim holds, and consider $k> K_{l, {\delta}/{2}}$.
Then, 
\begin{eqnarray*}
f_{V_k}(t) 
& = &  \tfrac{1}{p}  f_{V}(t) \prod_{i=1}^k \Pr\left(N \geq t^k - t \right) \\
& \leq &  \tfrac{1}{p}  f_{V}(t) \left[ \Pr\left(N \geq v_p + \underline{N} -\tfrac{\delta}{2} - t \right)\right]^{l^2} \\
& = &  \tfrac{1}{p}  f_{V}(t) \left[ \Pr\left(N \geq  \underline{N} +\tfrac{\delta}{2}  \right)\right]^{l^2} \\
& \leq &  \tfrac{1}{p}  f_{V}(t) \left[  1 -\tfrac{1}{l} \right]^{l^2} \to 0
\end{eqnarray*}
as $l\to \infty$.
So, for every $t<v_p$, the density $f_{V_k}(t) $ converges to zero, which necessarily leads to a prefect screening.
\hfill
\end{proof}

\end{document}